\newtheorem{theorem}{Theorem}[section]
\newtheorem{lemma}[theorem]{Lemma}
\newtheorem{proposition}[theorem]{Proposition}
\theoremstyle{definition}
\newtheorem{assumption}[theorem]{Assumption}
\newtheorem{definition}[theorem]{Definition}
\theoremstyle{remark}
\newtheorem{remark}[theorem]{Remark}
\newcommand{\hilb}{\mathcal{H}}
\newcommand{\fock}{\mathcal{F}}
\newcommand{\hfrak}{\mathfrak{H}}
\newcommand{\e}{\mathrm{e}}
\newcommand{\g}{\mathrm{g}}
\renewcommand{\Im}{\operatorname{Im}}
\renewcommand{\a}[1]{a\!\left(#1\right)}
\newcommand{\adag}[1]{a^*\!\left(#1\right)}
\newcommand{\dOmega}{\mathrm{d}\Gamma(\omega)}
\newcommand{\dom}{\operatorname{Dom}}
\newcommand{\ran}{\operatorname{Ran}}
\renewcommand{\ker}{\operatorname{Ker}}
\renewcommand{\span}{\operatorname{Span}}
\def\smalloverbrace#1{\mathop{\vbox{\m@th\ialign{##\crcr\noalign{\kern3\p@}%
				\tiny\downbracefill\crcr\noalign{\kern3\p@\nointerlineskip}%
				$\hfil\displaystyle{#1}\hfil$\crcr}}}\limits}
\definecolor{cblue}{rgb}{0.16, 0.32, 0.75}
\definecolor{cred}{rgb}{0.7, 0.11, 0.11}
\newcommand\blfootnote[1]{%
  \begingroup
  \renewcommand\thefootnote{}\footnote{#1}%
  \addtocounter{footnote}{-1}%
  \endgroup
}
\title[Generalized spin--boson models with mild ultraviolet divergences]{Self-adjointness and domain of generalized spin--boson models with mild ultraviolet divergences}
\author{Sascha Lill\textsuperscript{\,1}\hspace{3pt}\orcidlink{0000-0002-9474-9914}\hspace{1pt}}
\author{Davide Lonigro\textsuperscript{\,2}\hspace{3pt}\orcidlink{0000-0002-0792-8122}\hspace{1pt}}
\address{\footnotesize \textsuperscript{1}Dipartimento di Matematica Federigo Enriques, Universit\`a degli Studi di Milano, I-20133 Milan, Italy}
\address{\footnotesize \textsuperscript{2}Department Physik, Friedrich-Alexander-Universität Erlangen-Nürnberg, Staudtstraße 7, 91058 Erlangen, Germany}
\email{\footnotesize \href{mailto:sascha.lill@unimi.it}{\texttt{sascha.lill@unimi.it}}}
\email{\footnotesize \href{mailto:davide.lonigro@fau.de}{\texttt{davide.lonigro@fau.de}}}
\begin{document}
	
	\maketitle
 \thispagestyle{empty}

 \vspace{-0.5cm}
	\begin{abstract}
		We provide a rigorous construction of a large class of generalized spin--boson models with ultraviolet-divergent form factors. This class comprises various models of many possibly non-identical atoms with arbitrary but finite numbers of levels, interacting with a boson field. Ultraviolet divergences are assumed to be mild, such that no self-energy renormalization is necessary. Our construction is based on recent results by A. Posilicano, which also allow us to state an explicit formula for the domain of self-adjointness for our Hamiltonians.
	\end{abstract}

    \blfootnote{2020 \textit{Mathematics Subject Classification}. 46N50, 47A10, 47B25, 81Q10, 81T16.}

    \noindent \small \textbf{Keywords}: Spin--boson model; renormalization; ultraviolet divergence; self-adjointness domain; interior--boundary conditions; scales of Hilbert spaces. \normalsize
    
	\section{Introduction and main result}\label{sec:intro}
	
	We consider a quantum mechanical system, described by a common finite--dimensional Hilbert space $\mathfrak{h} := \mathbb{C}^D$, coupled to a single boson field, described by an abstract single-particle Hilbert space $\hilb$. The Hilbert space of the entire system is given by
    \begin{equation}
        \hfrak := \mathfrak{h} \otimes \fock, \qquad
        \fock := \fock^+(\hilb) = \bigoplus_{n = 0}^\infty S_n \hilb^{\otimes n},
    \end{equation}
    with $S_n$ being the symmetrization operator, and $\fock$ the \textit{bosonic Fock space}. The Hamiltonian of the non-interacting system is of the form
    \begin{equation}\label{eq:hfree}
        H_{\textnormal{free}} := K\otimes I_{\fock}+I_{\mathfrak{h}}\otimes\dOmega,
    \end{equation}
    where $K\in\mathbb{C}^{D \times D}$ is a self-adjoint matrix describing the energy of the atoms, and $\dOmega$ is the second quantization of some positive self-adjoint operator $\omega\geq m_0>0$ defined on a dense subspace of $\hilb$. $m_0$ is also called the \textit{mass of the boson}. Without loss of generality, and for physical convenience, we will fix $\hilb$ as the space of square-integrable complex-valued functions on some measure space $(X,\Sigma,\mu)$, and $\omega$ as the multiplication by some measurable real-valued function $k\in X\mapsto\omega(k)\in\mathbb{R}$.

    The interaction between the quantum system and the field will be constructed as follows. Given $N\in\mathbb{N}$, we take a family of matrices $B_j \in \mathbb{C}^{D \times D}, j \in \{1, \ldots, N\}$, and a family of form factors $f_j, j \in \{1, \ldots, N\}$, each being a measurable function $f_j : X \to \mathbb{C}$. We consider an interaction term in the form $A^* + A$, where, formally,
    \begin{equation} \label{eq:AAdag}
        A := \sum_{j = 1}^N B_j^* \otimes a(f_j), \qquad
        A^* := \sum_{j = 1}^N B_j \otimes a^*(f_j),
    \end{equation}
    with $a^*(f_j), a(f_j)$ (formally) being the bosonic creation/annihilation operators on $\fock$ defined by requiring
    \begin{equation} \label{eq:adaggera}
    \begin{aligned}
        a^*(f_j) S_n (\phi_1 \otimes \ldots \otimes \phi_n) := &\sqrt{n+1} \,S_n(f_j \otimes \phi_1 \otimes \ldots \otimes \phi_n),\\ 
        a(f_j) S_n (\phi_1 \otimes \ldots \otimes \phi_n) := &\frac{1}{\sqrt{n}} S_n\sum_{\ell = 1}^n \langle f_j, \phi_\ell \rangle (\phi_1 \otimes \ldots \otimes \phi_{\ell-1} \otimes \phi_{\ell+1} \otimes \ldots \otimes \phi_n)
    \end{aligned}
    \end{equation}
   for any $ \phi_1, \ldots, \phi_n \in \hilb $ and $n\ge0$ (for $a^*(f_j)$) or $n\ge1$ (for $a(f_j)$), respectively. One easily checks that these operators satisfy the canonical commutation relations (CCR)
   \begin{equation} \label{eq:CCR}
        [a(f), a(g)] = [a^*(f), a^*(g)] = 0, \qquad
        [a(f), a^*(g)] = \langle f, g \rangle.
   \end{equation}
   Thus, the total Hamiltonian formally reads
    \begin{equation}\label{eq:model}
    \begin{aligned}
		H = H_{f_1,\dots,f_N}
        :=&H_{\textnormal{free}} + A^* + A\\
        =&H_{\textnormal{free}} + \sum_{j=1}^N\left[B_j\otimes\adag{f_j} + B_j^*\otimes\a{f_j}\right].
    \end{aligned}
	\end{equation}
    Such models, denoted as \textit{generalized spin--boson} (GSB) \textit{models}, were first introduced, to our best knowledge, in~\cite{arai1997existence} as a generalization of the well-known spin--boson model; their mathematical properties have been extensively addressed in the literature, cf.~\cite{hirokawa2001remarks,hubner1995spectral,hirokawa1999expression,arai1990asymptotic,amann1991ground,davies1981symmetry,fannes1988equilibrium,hubner1995radiative,reker2020existence,hasler2021existence} for the single spin case and~\cite{arai2000essential,arai2000ground,falconi2015self,takaesu2010generalized,teranishi2015self,teranishi2018absence} for the general one.
 
    The aim of this work is to give, for a general choice of $(N,D)$ and of the operators $B_1,\dots,B_N$, a rigorous interpretation of Eq.~\eqref{eq:model} as a self-adjoint operator on a dense subspace of $\hfrak$ in the case in which the form factors $f_1,\dots,f_N$ are not square-integrable---a problem that is also called \textit{non-perturbative renormalization problem}. At this point, we did not specify yet 
    the integrability properties of the $f_j$'s.
    And indeed, the difficulty of the renormalization problem heavily depends on the growth properties of $f_j$.
    
    To elucidate how the difficulty of the renormalization problem depends on the growth of $f_j$, let us state some well-known results in the case $N=D=1$, also known as the \textit{van Hove model}~\cite{derezinski2003vanHove}. In that case, we only have one  form factor $f : X \to \mathbb{C}$. Depending on the behavior of $f$ with respect to the dispersion relation $\omega$, one can now essentially distinguish 4 difficulty classes for the renormalization problem.\medskip
    
    \paragraph*{\textbf{Case 0}:} $\int|f|^2 < \infty \Leftrightarrow f \in \hilb$. This is the simplest case. Both $A^*$ and $A$ are Kato perturbations of $H_{\textnormal{free}}$, so Eq.~\eqref{eq:model} readily defines a self-adjoint operator $H$ on $\dom (H) = \dom(H_{\textnormal{free}})$.\medskip
    
    \paragraph*{\textbf{Case 1}:} $\int|f|^2 = \infty$, but still, $\int\frac{|f|^2}{\omega} < \infty$. In this case, $A^*+A$ is still a form perturbation of $H_{\textnormal{free}}$, so the KLMN theorem allows defining $H$ as a self-adjoint operator. However, the perturbation may change the operator domain.\medskip
    
    \paragraph*{\textbf{Case 2}:} $\int\frac{|f|^2}{\omega} = \infty$, but still, $\int\frac{|f|^2}{\omega^2} < \infty$. Here, we can no longer make sense of Eq.~\eqref{eq:model} without further modifications. In fact, one needs a \textit{self-energy renormalization procedure}, to obtain a well-defined Hamiltonian: for some sequence of UV-cutoffs $\Lambda \in \mathbb{R}$, one introduces the regularized form factors $f_\Lambda \in \hilb$ such that $\lim_{\Lambda \to \infty} f_\Lambda = f$ in a suitable sense. Substituting $f$ by $f_\Lambda$ in $H$ then yields the regularized Hamiltonians $H_\Lambda$, which are well-defined on a dense subspace of $\fock$. By subtracting from them the \textit{self-energies} $ E_\Lambda := -\int\frac{|f_\Lambda|^2}{\omega} \in \mathbb{R} $, one may then obtain a renormalized Hamiltonian:
    \begin{equation}
        \widetilde{H} := \lim_{\Lambda \to \infty} (H_\Lambda - E_\Lambda),
    \end{equation}
    which is self-adjoint on a dense domain in $\fock \simeq \hfrak$.
    
    In fact, using a (unitary) \textit{dressing operator} $W_\Lambda := \exp \left( \a{\omega^{-1}f_\Lambda} - \adag{\omega^{-1}f_\Lambda} \right)$, it turns out that
    \begin{equation}
        W_\Lambda^* H_\Lambda W_\Lambda = H_{\textnormal{free}} + E_\Lambda,
    \end{equation}
    so $W_\Lambda$ extracts the self-energy. In particular,
    \begin{equation} \label{eq:directdressing}
        \widetilde{H} = W H_{\textnormal{free}} W^*,
    \end{equation}
    with the unitary dressing operator $W := \exp \left( \a{\omega^{-1}f} -\adag{\omega^{-1}f} \right)$, so the dynamics generated by $\widetilde{H}$ are unitarily equivalent to those of $H_{\textnormal{free}}$.\medskip
    
    \paragraph*{\textbf{Case 3}:} $\int\frac{|f|^2}{\omega^2} = \infty$. This is the most singular case. Formally, Eq.~\eqref{eq:directdressing} would yield a renormalized operator $\widetilde{H}$ that is equivalent to $H_{\textnormal{free}}$, but since $W$ is ill-defined, Eq.~\eqref{eq:directdressing} makes mathematically no sense.
    One way out of this problem is to interpret $H$ as in Eq.~\eqref{eq:model} as an element of a suitably defined $^*$-algebra $\mathcal{A}$ and to replace the conjugation with $W$ by an (algebraic) \textit{Weyl transformation} $\mathcal{V}_W : \mathcal{A} \to \mathcal{A}$ that maps $H \mapsto H_{\textnormal{free}}$. As $\mathcal{V}_W$ preserves the commutation relations, one may see $H_{\textnormal{free}}$ as a legitimate renormalized Hamiltonian, which is self-adjoint on a dense domain in $\fock$~\cite[Sect.~3]{FewsterRejzner2020introAQFT}.\footnote{More precisely, one considers a subalgebra $\mathcal{A}_{\rm obs} \subset \mathcal{A}$, containing all relevant observables, and constructs a representation $\pi$ of $\mathcal{A}_{\rm obs}$ on some domain $\mathcal{D} \subset \fock$, in which $\pi(H)=H_{\textnormal{free}}$. So, $H$ acts as if it were a free Hamiltonian. However, in this representation, $\pi(a^*(\varphi)), \pi(a(\varphi))$ for $ \varphi \in \hilb$ (if they are defined) do not act like creation/annihilation operators in Eq.~\eqref{eq:adaggera}, and there is also no unitary operator $W : \fock \to \fock$ that makes $W \pi(a^*(\varphi)) W^*, W \pi(a(\varphi)) W^*$ act like creation/annihilation operators. So $\pi$ is inequivalent to the standard Fock representation. Equivalently, one also says that $\pi$ is a non-Fock representation, or that $H$ leaves the Fock space.}
    
    Another way out is to interpret $W$ as a map from a subspace of $\fock$ to some Fock space extension $\overline{\fock}$, and to define $E := -\int\frac{|f|^2}{\omega}$ as an element of a suitable vector space~\cite{lill2022time, lill2020extended}. This way, one can give a precise mathematical meaning to the equation
        \begin{equation}
            H_{\textnormal{free}} = W (H - E) W^*,
        \end{equation}
    and interpret $H_{\textnormal{free}}$ as the renormalized Hamiltonian.\medskip

    Coming back to the general case, let us mention that in \textbf{Case 0}, i.e., $f_1,\dots,f_N\in\hilb$ for $N, D \in \mathbb{N}$, the renormalization problem is easily solved: $A^* + A$ is again a Kato perturbation of $H_{\textnormal{free}}$, so $H$ as in Eq.~\eqref{eq:model} is self-adjoint with coupling-independent domain $\dom (H)=\dom (H_{\textnormal{free}})=\mathfrak{h}\otimes\dom(\dOmega)$.

    In this work, we address the renormalization problem for general values of $N, D \in \mathbb{N}$, and form factors $f_1,\dots,f_N$ corresponding to \textbf{Case 1}. Our proofs essentially rely on a recent work by Posilicano~\cite{posilicano2020self}, which introduces an abstract machinery for renormalizing Hamiltonians of the form $H_{\textnormal{free}}+A^*+A$ and finding their domains. This paper thus demonstrates how these results can be successfully applied to generalized spin--boson models. To our best knowledge, the present work is the first rigorous proof of a renormalization in Case 1 for this kind of models with a general choice of $B_1,\dots,B_N$: while the existence of a self-adjoint operator associated with Eq.~\eqref{eq:model} is in fact guaranteed by the KLMN theorem (cf.~\cite[Proposition~4.2]{lonigro2022generalized}), explicit expressions for the domain were only obtained in particular cases~\cite{lonigro2022generalized,lonigro2022renormalization,lonigro2023self}.

    Within \textbf{Case 2}, the application of the abstract results from~\cite{posilicano2020self}, while still possible, offers new challenges---as it will be elucidated in Section~\ref{sec:posilicano} for the abstract case, an unambiguous identification of a self-adjoint operator associated with Eq.~\eqref{eq:model} is no longer viable. One needs to specify a suitably ``regularizing operator'' $T$ which must be both mathematically admissible and physically meaningful. This problem will be investigated elsewhere. It would be highly desirable to also achieve self-adjointness results in \textbf{Case 3}; however, the construction of a suitable dressing transformation $W$ for general values of $N$ and $D$ is far more challenging than within the van Hove model. We hope to achieve this goal in future research.
    
    For other models of matter--field interaction, there exists an extensive mathematical literature on the non-perturbative renormalization problem. Some references about successful renormalization schemes in Cases 1 and 2 include~\cite{nelson1964interaction, eckmann1970model, frohlich1973infrared, sloan1974polaron, griesemer2016self, griesemer2018domain}. A successful renormalization in Case 3 was achieved for a related model by Gross~\cite{gross1973relativistic}. We refer to~\cite[Sect.~1.3]{lill2022time} for a more detailed discussion of the related literature. We also point out that there exists a recently introduced renormalization technique based on the so-called ``interior--boundary conditions'' (IBC)~\cite{teufel2021hamiltonians, teufel2016avoiding, lampart2018particle, lampart2019nelson, lampart2019nonrelativistic, schmidt2021massless, lampart2020renormalized}, which is closely related to the work of Posilicano and ours. IBC has successfully rendered renormalized Hamiltonians together with their domains in Cases 1 and 2. For a more detailed overview about the recent IBC literature, we refer the reader to~\cite[Sect.~1.4]{lill2022time}.	

    Furthermore, the renormalization problem has been addressed and solved in the context of Constructive Quantum Field Theory (CQFT), e.g.,~\cite{glimm1968lambda, glimm1970lambda, glimm1970lambdaIII, glimm1971yukawa2, glimm1973positivity}. Here, Haag's Theorem~\cite{haag1955quantum, haag1996local} requires to ``leave the Fock space'', as in Case 3 above, while relativistic causality admits the employment of further techniques (e.g., Segal's theorem~\cite{segal1967notes}, which circumvents the search for a dressing transformation). We refer to~\cite{dedushenko2023snowmass, summers2012perspective} for an extensive overview of the literature on CQFT.
    
    For generalized spin--boson models, there is a plenitude of applications in the physics literature, often requiring UV-divergent form factors~\cite{leggett1987dynamics,breuer2002theory,weiss2012quantum}, common examples being the description of subradiance and superradiance phenomena in quantum optics~\cite{dicke1954coherence,gross1982superradiance,van2013photon}, e.g., bound states in the continuum in photonic waveguides~\cite{dorner2002laser,tufarelli2013dynamics,sanchez2017dynamical,gonzalez2017efficient,facchi2019bound,lonigro2021stationary}, or toy models of quantum Markovian systems~\cite{lonigro2022quantum,lonigro2022regression}.  On the mathematical side, there is a rich literature about the spectrum and dynamics of the spin--boson model, see e.g.,~\cite{fannes1988equilibrium, arai1999absence, derezinski2001spectral, jakvsic2006mathematical, bach2017existence, teranishi2018absence, hasler2021existence} and the references therein; however, the form factors often contain a UV-cutoff and the renormalization problem is not addressed. Concerning the renormalization of the spin--boson model, we point out a recent result by Dam and M\o ller~\cite{dam2020asymptotics}, proving that certain spin-boson models with $\mathfrak{h} = \mathbb{C}^2$ in Case 3 renormalize to a direct sum of two van Hove Hamiltonians for an appropriately chosen dressing transformation. Finally, an explicit renormalization of a particular class of spin--boson models was recently addressed by one of the authors, first for $N=1$ in Cases~1 and~2~\cite{lonigro2022generalized, lonigro2022renormalization} and then for $N>1$ in Case~1~\cite{lonigro2023self}. However, the present article is, up to our best knowledge, the first work establishing self-adjointness \textit{and} a formula for the domain in the general case.

    To achieve this result, we will resort to the formalism of \textit{scales of Hilbert spaces}~\cite{albeverio2000singular,albeverio2007singularly,simon1995spectral}. Essentially, one introduces two scales of spaces $\{\hilb_s\}_{s\in\mathbb{R}},\{\fock_s\}_{s\in\mathbb{R}}$, obtained as the completion of a space of sufficiently well-behaved vectors under the norms
	\begin{equation} \label{eq:hilbscale}
		\left\|f\right\|_{\hilb_s}:=\left\|\omega^{s/2}f\right\|_{\hilb},\qquad\left\|\Psi\right\|_{\fock_s}:=\left\|\left(\dOmega+1\right)^{s/2}\Psi\right\|_{\fock}.
	\end{equation}
	Of course, for $s=0$ one obtains the original spaces, and $\hilb_{+s}\subset\hilb\subset\hilb_{-s}$, as well as $\fock_{+s}\subset\fock\subset\fock_{-s}$ for any $s\geq0$, with all inclusions being dense. Mathematically, $\hilb_{-s}$ is a space of non-normalizable functions which, however, satisfy a growth constraint weaker than the one of $L^2$: a more negative $-s$ corresponds to stronger allowed growths where $\omega(k)$ is large, that is, to stronger UV-singularities. In fact, we can conveniently express the divergence case in terms of the index $s$:
    \begin{itemize}
        \item \textbf{Case 0}: $f_1,\dots,f_N \in \hilb = \hilb_0$;
        \item \textbf{Case 1}: $f_1,\dots,f_N \in \hilb_{-1} \setminus \hilb_0$;
        \item \textbf{Case 2}: $f_1,\dots,f_N \in \hilb_{-2} \setminus \hilb_{-1}$;
        \item \textbf{Case 3}: $f_1,\dots,f_N \notin \hilb_{-2}$.
    \end{itemize}	
	Within this formalism, given $s\geq1$ and $f\in\hilb_{-s}$, one can define two continuous linear operators $\a{f}\in\mathcal{B}(\fock_{+s},\fock)$ and $\adag{f}\in\mathcal{B}(\fock,\fock_{-s})$, whose action is compatible with the ``regular'' creation and annihilation operators~\eqref{eq:adaggera} for $f\in\hilb$, and which can be approximated, in the respective operator norms, by sequences of regular creation and annihilation operators (see e.g.,~\cite[Proposition~3.7]{lonigro2022generalized}). 
 
	\subsection{Main result}

    Our results are based on a recent work by Posilicano~\cite{posilicano2020self}, which provides abstract conditions for the self-adjointness of Hamiltonians corresponding to the formal expression $H_{\textnormal{free}}+A^*+A$, see Section~\ref{sec:posilicano}. The conditions on the matrices $B_j$ under which the results of~\cite{posilicano2020self} apply to our Hamiltonian $H$ in Eq.~\eqref{eq:model} can be concluded as follows:
    \begin{assumption} \label{as:Bj}
    The interaction matrices $B_j$, $j \in \{1, \ldots, N\}$, are chosen such that:
    \begin{itemize}
        \item[(i)] every $B_j$ is normal, i.e., $B_j^* B_j=B_jB_j^*$,
        \item[(ii)] $[B_j,B_\ell]=0$, $j,\ell=1,\dots,N$, and
        \item[(iii)] $\bigcap_{j=1}^N\ker B_j=\{0\}$.
    \end{itemize}
    \end{assumption}
    
    As discussed, we will examine the case $f_1,\dots,f_N \notin \hilb$. In fact, we will even require the stronger assumption that no linear combination of the $f_j$ lies in $\hilb$ (except for the trivial one, of course).
    \begin{definition}[{$\hilb$-independence;~\cite[Def.~3.1.1]{albeverio2000singular}}] \label{def:hilbindepentent}
        A family $f_1, \ldots, f_N \in \hilb_{-s}\setminus\hilb, s > 0$ is called $\hilb$-\textit{independent}, if and only if for all $c_1, \ldots, c_N \in \mathbb{C}$,
        \begin{equation}\label{eq:hilbindependence}
            \sum_{j=1}^N c_j f_j \notin \hilb, \quad \text{unless } c_j = 0 \; \forall j.
        \end{equation}
    \end{definition}
    
    Also notice that there is no loss of generality in requiring the operators $B_j$'s to be linearly independent: if this is not the case, i.e., if there exist $\alpha_1,\dots,\alpha_{N-1}\in\mathbb{C}$ such that (up to permutations) $B_N=\sum_{j=1}^{N-1}\alpha_jB_j$, then
    \begin{eqnarray}\label{eq:rearranging}
        A=\sum_{j=1}^{N-1}B_j^*\otimes\a{f_j}+\sum_{j=1}^{N-1}\overline{\alpha_j}B_j^*\otimes\a{f_N}\nonumber
        =\sum_{j=1}^{N-1}B_j^*\otimes\a{f_j+\alpha_jf_N}.
    \end{eqnarray}
    This request is compatible with the $\hilb$-independence of the form factors: given an $\hilb$-independent set $\{f_j\}_{j=1,\dots,N}$, the set of form factors after the rearrangement, $\{f_j+\alpha_jf_N\}_{j=1,\dots,N-1}$, is still $\hilb$-independent. 
      
    \begin{theorem}[Main Result]\label{thm:main}
	Let $f_1, \ldots, f_N \in \hilb_{-s} \setminus \hilb$, $s\le1$, be a family of $\hilb$-independent form factors and let $B_1, \ldots, B_N \in \mathbb{C}^{D \times D}$ satisfy Assumption~\ref{as:Bj}. Then the following statements hold true:
    \begin{itemize}
        \item[(a)] The operator $H = H_{\textnormal{free}} + A^* + A$ as in Eq.~\eqref{eq:model}, with $H_{\textnormal{free}}$ as in Eq.~\eqref{eq:hfree} and $A,A^*$ as in Eq.~\eqref{eq:AAdag}, is well-defined and self-adjoint on the domain
        \begin{equation} \label{eq:Hdomain}
            \dom (H) = \left\{ \Psi \in \hfrak_{2-s} \; : \; \left[ 1 + (H_{\textnormal{free}}-z_0)^{-1}A^* \right] \Psi \in \dom (H_{\textnormal{free}}) \right\},
        \end{equation}
        where $z_0\in\rho(H_{\textnormal{free}})\cap\mathbb{R}$.
        \item[(b)] For every $j \in \{1, \ldots, N\}$, there exists an approximating sequence $\{f_{j, n}\}_{n \in \mathbb{N}} \subset \hilb$ with $\Vert f_{j, n} - f_j \Vert_{\hilb_{-s}} \to 0$ as $n \to \infty$, such that, defining
        \begin{equation} \label{eq:AnAndagger}
            A_n := \sum_{j=1}^N B_j^* \otimes a(f_{j, n}), \qquad
            A_n^* := \sum_{j=1}^N B_j \otimes a^*(f_{j, n}),
        \end{equation}
        we have
        \begin{equation}\label{eq:normresconvergence}
            (H_{\textnormal{free}} + A_n^* + A_n) \to (H_{\textnormal{free}} + A^* + A)
        \end{equation}
        as $n\to\infty$ in the norm resolvent sense.
    \end{itemize}
	\end{theorem}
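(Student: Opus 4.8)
The plan is to derive Theorem~\ref{thm:main} from Posilicano's abstract self-adjointness machinery (Section~\ref{sec:posilicano}) by verifying its hypotheses for the GSB Hamiltonian. The essential structural observation is that, by Assumption~\ref{as:Bj}, the matrices $B_1,\dots,B_N$ form a commuting family of normal operators on $\mathfrak{h}=\mathbb{C}^D$, hence can be simultaneously diagonalized: there is an orthonormal basis $\{e_1,\dots,e_D\}$ of $\mathfrak{h}$ and scalars $\beta_{j,d}\in\mathbb{C}$ with $B_j e_d=\beta_{j,d}e_d$. In this basis the interaction decouples as a direct sum over $d\in\{1,\dots,D\}$ of van-Hove-type operators with single form factor $g_d:=\sum_{j}\overline{\beta_{j,d}}\,\bar{\cdots}$ — more precisely the creation part acts as $\sum_j \beta_{j,d}\, a^*(f_j)$ on the $d$-th sector, so the relevant effective form factor is $\phi_d:=\sum_{j=1}^N \overline{\beta_{j,d}}\, f_j$ (the conjugation bookkeeping to be fixed when the annihilation/creation parts are matched). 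Condition (iii), $\bigcap_j\ker B_j=\{0\}$, guarantees that for every $d$ at least one $\beta_{j,d}\neq0$, so $\phi_d\neq0$; combined with $\hilb$-independence of the $f_j$, one gets $\phi_d\in\hilb_{-s}\setminus\hilb$ for every $d$. Thus $H$ is unitarily equivalent to $\bigoplus_{d=1}^D H_{\phi_d}^{\mathrm{vH}}$, a direct sum of renormalized van Hove Hamiltonians with Case-1 form factors, each of which falls squarely within the scope of Posilicano's theorem.

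Concretely, I would proceed as follows. First I would set up the abstract Posilicano data: the free operator $H_{\textnormal{free}}$, the ``trace-type'' map implementing $A^*$ as an element of $\mathcal{B}(\hfrak,\hfrak_{-s})$ and $A$ as an element of $\mathcal{B}(\hfrak_{+s},\hfrak)$ (using the continuous extensions of $a(f)$, $a^*(f)$ to the Hilbert-space scale recalled in the excerpt, valid because $s\le1$ and $f_j\in\hilb_{-s}$), and then check that the operator $A (H_{\textnormal{free}}-z_0)^{-1}A^*$ — the quantity whose boundedness and whose image's injectivity properties Posilicano needs — is well-defined and has the required mapping properties. This is where the $s\le1$ hypothesis is used crucially: $(H_{\textnormal{free}}-z_0)^{-1}$ maps $\hfrak$ into $\hfrak_2\subseteq\hfrak_{2-s}\subseteq\hfrak_{+s}$, so the composition $A^*$ then resolvent then $A$ lands in $\hfrak$, and one gets the bound $\|A(H_{\textnormal{free}}-z_0)^{-1}A^*\|_{\mathcal B(\hfrak)}<\infty$. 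In the van-Hove decomposition this is the elementary estimate $\int |\phi_d|^2/(\omega+c)<\infty$ for $c>0$, which holds exactly because $\phi_d\in\hilb_{-1}$. I would then invoke Posilicano's theorem (in the ``no regularizing operator needed'' regime, i.e.\ Case~1, contrasted in the introduction with Case~2) to conclude that $H$ is self-adjoint with domain given by \eqref{eq:Hdomain}; the inclusion $\dom(H)\subseteq\hfrak_{2-s}$ coming from the fact that $(1+(H_{\textnormal{free}}-z_0)^{-1}A^*)$ maps $\hfrak_{2-s}$ to $\hfrak_2=\dom(H_{\textnormal{free}})$ plus a correction, and that this is the natural maximal domain on which the formal expression makes sense as a vector in $\hfrak$.

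For part (b), the plan is to exploit that the abstract Posilicano construction is stable under approximation of the interaction in the relevant operator norm. I would choose $f_{j,n}:=\chi_{\{\omega\le n\}}\, f_j$ (or any mollified UV cutoff), which lies in $\hilb$ since $\omega\ge m_0>0$ forces $\int_{\omega\le n}|f_j|^2 \le n^{s}\int |\phi|^2/\omega^{s}<\infty$ via $f_j\in\hilb_{-s}$, and which satisfies $\|f_{j,n}-f_j\|_{\hilb_{-s}}\to0$ by dominated convergence. Then $A_n^*\to A^*$ in $\mathcal B(\hfrak,\hfrak_{-s})$ and $A_n\to A$ in $\mathcal B(\hfrak_{+s},\hfrak)$, hence $A_n(H_{\textnormal{free}}-z_0)^{-1}A_n^*\to A(H_{\textnormal{free}}-z_0)^{-1}A^*$ in $\mathcal B(\hfrak)$; feeding this into Posilicano's resolvent formula — which expresses $(H-z)^{-1}$ as $(H_{\textnormal{free}}-z)^{-1}$ plus a correction built from $A$, $A^*$ and the inverse of an explicit bounded operator depending continuously on these data — yields norm convergence of the resolvents, i.e.\ \eqref{eq:normresconvergence}. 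A small point to handle is that for each finite $n$ the operator $H_{\textnormal{free}}+A_n^*+A_n$ is the honest Kato-perturbation Hamiltonian of Case~0 (since $f_{j,n}\in\hilb$), so one must check that Posilicano's formula reproduces this standard operator; this is a consistency statement that should already be contained in, or follow immediately from, the abstract framework.

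The main obstacle I anticipate is not the diagonalization or the approximation, but the precise matching of the GSB interaction $A^*+A$ with the abstract ``boundary map'' formalism of Posilicano: one must verify that the quadratic form / operator identifications are exactly the ones his hypotheses require (in particular the symmetry relating the map defining $A^*$ to the one defining $A$, and the self-adjointness — as opposed to mere symmetry — of the relevant ``$\Gamma$-operator'' whose kernel triviality is equivalent here to the $\hilb$-independence hypothesis). Getting the domain formula \eqref{eq:Hdomain} with the stated regularity $\hfrak_{2-s}$ rather than a coarser space, and checking that the $z_0$-independence of the right-hand side is genuine, will require carefully tracking how the scale indices propagate through Posilicano's construction — this bookkeeping, rather than any single hard estimate, is where the real work lies.
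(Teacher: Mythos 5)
Your overall route is the paper's: verify the hypotheses of Posilicano's abstract theorems for $H_{\textnormal{free}}$, $A$, $A^*$, note that $T=-ARA^*$ is bounded precisely because $s\le 1$ (the mapping chain $\hfrak\xrightarrow{A^*}\hfrak_{-s}\xrightarrow{R}\hfrak_{2-s}\subset\hfrak_{+s}\xrightarrow{A}\hfrak$), and obtain (b) from a UV cutoff of the form factors fed into the resolvent formula. However, two points are genuine gaps, not bookkeeping. First, your claim that $H$ is unitarily equivalent to $\bigoplus_{d=1}^D H^{\mathrm{vH}}_{\phi_d}$ is false in general: Assumption~\ref{as:Bj} lets you simultaneously diagonalize the $B_j$, so the \emph{interaction} decomposes as $A=\bigoplus_d \a{\phi_d}$ with $\phi_d=\sum_j b_j^{(d)}f_j$, but the atomic matrix $K$ in $H_{\textnormal{free}}$ need not commute with the $B_j$ (in the $\sigma_x$ spin--boson model $K\propto\sigma_z$ while $B=\sigma_x$), so the full Hamiltonian does not split into van Hove blocks. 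The paper uses the diagonalization only to study $\ran A$ (Proposition~\ref{prop:ranA}), never to decompose $H$; any step that imports van Hove results for the full operator would fail, so the equivalence claim must be excised even though your subsequent ``concrete'' plan mostly avoids relying on it.

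Second, and more importantly, you never identify or verify the hypothesis of Posilicano's theorem where Assumption~\ref{as:Bj} and $\hilb$-independence actually enter: the \emph{denseness in $\hfrak$ of both $\ker A|_{\hfrak_2}$ and $\ran A|_{\hfrak_2}$}. You speak of ``injectivity properties'' and ``kernel triviality,'' which are not the required conditions. The paper proves denseness via coherent vectors: $\hilb$-independence makes the joint kernel $\{g\in\hilb_{+2}:\braket{f_j,g}=0\ \forall j\}$ dense in $\hilb$ (this is exactly where $f_j\notin\hilb$ is indispensable --- for square-integrable $f_j$ this set is a proper closed subspace and the argument collapses), whence $\mathfrak{h}\otimes\mathcal{E}(\mathcal{D}_0)\subset\ker A|_{\hfrak_2}$ is dense; and the eigenbasis decomposition of $A$ together with Assumption~\ref{as:Bj}(iii) gives $\phi_d\in\hilb_{-s}\setminus\hilb$, $\phi_d\neq 0$, for every $d$, whence each $\ran \a{\phi_d}|_{\fock_2}$ is dense. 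Your proposal contains the ingredients for the range half but omits the kernel half entirely. A smaller omission in part (b): the norm-resolvent convergence theorem also requires $A_n^*+A_n$ to be $H_{\textnormal{free}}$-bounded with an $n$-\emph{independent} relative bound smaller than one; the paper obtains this from $\Vert a^\sharp(f_{j,n})\Phi\Vert_{\fock}\le \Vert f_{j,n}\Vert_{\hilb}\Vert(\cN+1)^{1/2}\Phi\Vert_{\fock}$ combined with $\omega\ge m_0>0$, and your sketch does not address this uniformity.
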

    In the above statement, it is understood that the annihilation and creation operators in the definition of $A$ and $A^*$, whenever involving form factors $f_j\notin\hilb$, are to be interpreted as elements, respectively, of $\mathcal{B}(\fock_{+1},\fock)$ and $\mathcal{B}(\fock,\fock_{-1})$, cf.~Proposition~\ref{prop:af}.

    \begin{remark}[Identical atom spaces] \label{rem:spinspace}
        In the case in which the quantum system is composed by $N$ $d$-dimensional smaller systems (atoms) each separately interacting with the boson field, it is common to associate to each atom a Hilbert space $\mathbb{C}^d$, so $\mathfrak{h}=\mathbb{C}^{d^N}$, i.e., $D=d^N$. Typically, $B_j$ then only acts on the tensor factor of the $j$-th spin, which means it takes the form
        \begin{equation}\label{eq:identical}
            B_j = I_{\mathbb{C}^d} \otimes \ldots \otimes \underbrace{b_j}_{j\text{-th}} \otimes \ldots \otimes I_{\mathbb{C}^d}
        \end{equation}
        for some $b_j \in \mathbb{C}^{d \times d}$. This is a special case of our model~\eqref{eq:model}. All examples introduced in Section~\ref{sec:examples} will fall in this class. 
        
        With this choice, (i) in Assumption~\ref{as:Bj} becomes equivalent to $b_j^* b_j = b_j b_j^*$ and (ii) is trivially satisfied. Point (iii) is fulfilled, if and only if at least one of the matrices $b_1, \ldots, b_N$ has full rank. To see the latter statement, first observe that if some $b_j$ has full rank, then $\ker b_j = \{0\} \Rightarrow \ker B_j = \{0\} \Rightarrow \bigcap_j \ker B_j=\{0\}$. Conversely, if no $b_j$ has full rank, then we can find one $0 \neq u^{(j)} \in \mathbb{C}^d$ for each $b_j$, such that $b_j u^{(j)} = 0$. But then, $0 \neq u^{(1)} \otimes \ldots \otimes u^{(N)} \in \bigcap_j \ker B_j=\{0\} $.
   \end{remark}

    \begin{remark}[Infinite-dimensional $\mathfrak{h}$]\label{rem:infdim}
        It is not too difficult to see that our result also holds true if $\mathfrak{h}$ is infinite-dimensional and $B_j$ are compact operators satisfying Assumption~\ref{as:Bj}. In fact, the properties of $B_j$ only enter in Propositions~\ref{prop:kerA} and~\ref{prop:ranA}, as well as in Eq.~\eqref{eq:Anbound}. In the latter equation, the boundedness of all $B_j$ is exploited to prove that certain operators are closable\footnote{Here, we are adopting the following nomenclature: An operator $A$ on a Hilbert space $\hfrak$, with domain $\dom(A)$, is called closable if and only if there exists a closed operator $B: \dom(B)\subseteq\hfrak \to \hfrak, \dom(B) \supseteq \dom(A)$ that extends $A$ (see e.g.~\cite[Definition 1.3]{schmudgen2012unbounded}).}. Proposition~\ref{prop:kerA} also holds true for $B_j$ being generic bounded operators. In Proposition~\ref{prop:ranA}, we construct a common eigenbasis of $\{B_j\}_{j=1}^N$, which would also be possible if $B_j$ were compact operators. It is easy to see that in this case, the proof of Proposition~\ref{prop:ranA} (with an infinite eigenbasis) would still go through.
        
        In the case in which all $B_j$ are bounded and satisfy Assumption~\ref{as:Bj}, the proof of Proposition~\ref{prop:ranA} has to be modified using spectral calculus in order to obtain an analogue of Theorem~\ref{thm:main}. This generalization is straightforward but technical. To keep the current article short, we postpone it to a future publication.
    \end{remark}
    
    \section{Some examples}\label{sec:examples}

    Before starting with the proof of Theorem~\ref{thm:main}, let us discuss some particular instances of Hamiltonians in the form~\eqref{eq:model} to which Theorem~\ref{thm:main} applies.\medskip

    \paragraph*{\textbf{The $\sigma_x$ spin--boson model.}} Consider the following expression on $\mathbb{C}^2\otimes\fock$:
    \begin{equation}\label{eq:sigmax}
        H=\frac{\eta}{2}\,\sigma_z\otimes I_{\fock}+I_{\mathbb{C}^2}\otimes\dOmega+\sigma_x\otimes\left(\adag{f}+\a{f}\right),
    \end{equation}
    where $f\in\hilb_{-s}$, $s\leq 1$, and $\eta\geq0$; furthermore, $\sigma_x$ and $\sigma_z$ denote the first and third Pauli matrices, respectively:
    \begin{equation}
        \sigma_x=
        \begin{bmatrix}
            0&1\\
            1&0
        \end{bmatrix},\qquad    
        \sigma_z=
        \begin{bmatrix}
            1&0\\0&-1
        \end{bmatrix}.
    \end{equation}
    This is the model usually referred to as ``the'' spin--boson model in the mathematical literature, cf.~\cite{hirokawa2001remarks,hubner1995spectral,hirokawa1999expression,arai1990asymptotic,amann1991ground,davies1981symmetry,fannes1988equilibrium,hubner1995radiative}. Physically, it describes a two-level system (qubit), whose energy levels differ by a quantity $\eta$, undergoing \textit{decay} dynamics as a result of its interaction with the boson field; for example, it is employed in quantum optics to describe the interaction of a two-level atom with an electromagnetic field. In the monochromatic case ($\hilb=\mathbb{C}$), it reduces to the well-known Rabi model~\cite{xie2017quantum,braak2011integrability,hwang2015quantum,zhong2013analytical}.

    This model clearly belongs to the class of operators analyzed in the present work, corresponding to the case $D=2$, $N=1$, and with $K=\eta/2\;\sigma_z$, $B=\sigma_x$ (thus $A=\sigma_x\otimes\a{f}$). As such, Theorem~\ref{thm:main} applies. Using the obvious isomorphism $\hfrak=\mathbb{C}^2\otimes\fock\simeq\fock\oplus\fock$---that is, representing the most general element of $\hfrak$ as
    \begin{equation}
        \Psi=\begin{bmatrix}
        \Psi_\e\\\Psi_\g
        \end{bmatrix}:\quad \Psi_\e,\Psi_\g\in\fock,
    \end{equation}
    with $\Psi_\e$ and $\Psi_\g$ respectively playing the role of the ``excited'' and ``ground'' wavefunction of the boson field, we can conveniently rewrite Eq.~\eqref{eq:sigmax} as
    \begin{eqnarray}\label{eq:sigmax_explicit}
        H&=&\begin{bmatrix}
            \dOmega+\frac{\eta}{2}  &   \adag{f}+\a{f}         \\
            \adag{f}+\a{f}          & \dOmega-\frac{\eta}{2}
        \end{bmatrix}\nonumber\\
        &=&\underbrace{
        \begin{bmatrix}
            \dOmega+\frac{\eta}{2}&                            \\
                                  & \dOmega-\frac{\eta}{2}
        \end{bmatrix}}_{H_{\textnormal{free}}}+
        \underbrace{
        \begin{bmatrix}
                    &    \adag{f}   \\
         \adag{f}   & 
        \end{bmatrix}}_{A^*}+
        \underbrace{
        \begin{bmatrix}
                    &   \a{f}   \\
          \a{f}     & 
        \end{bmatrix}}_{A}.
    \end{eqnarray}
    Clearly, Assumption~\ref{as:Bj} is satisfied since $\sigma_x$ is Hermitian (thus normal) and invertible. By Theorem~\ref{thm:main}, the expression above defines a self-adjoint operator on the domain
    \begin{equation}\label{eq:sigmax_domain}
        \dom(H)=\left\{\begin{bmatrix}
        \Psi_\e\\\Psi_\g
        \end{bmatrix}:\;\begin{bmatrix}
        \Psi_\e+\frac{1}{\dOmega+\frac{\eta}{2}-z_0}\adag{f}\Psi_\g\\
        \Psi_\g+\frac{1}{\dOmega-\frac{\eta}{2}-z_0}\adag{f}\Psi_\e
        \end{bmatrix}\in\dom(H_{\textnormal{free}})   
        \right\},
    \end{equation}
    with $\dom(H_{\textnormal{free}})=\dom(\dOmega)\oplus\dom(\dOmega)$ in this representation. Here, $z_0$ is any fixed real number in the resolvent set of $H_{\textnormal{free}}$. 

    This expression should be compared with the one in~\cite[Sections 5--6]{lonigro2022generalized}, where the \emph{rotating-wave approximation} (RWA) spin--boson model was investigated. The RWA corresponds (cf.~\cite{agarwal1971rotating}) to replacing $\sigma_x\otimes(\a{f}+\adag{f})$ by $\sigma_+\otimes\a{f}+\sigma_-\otimes\a{f}$, where
    \begin{equation}
        \sigma_+ = \begin{bmatrix} 0&1 \\ 0&0 \end{bmatrix}=\sigma_-^*,\qquad \sigma_x=\sigma_++\sigma_-;
    \end{equation}
    physically, counter-rotating terms are removed. The expression of the domain without suppressing such terms is indeed more involved---the ``excited'' and ``ground'' components of the total state acquire a \textit{mutual} interdependence, differently from what happens in the rotating-wave case in which only the ground component depends on the excited one. Note that, since $\sigma_+$ has a nontrivial kernel and is not normal, Theorem~\ref{thm:main} does not apply to the RWA spin--boson model; nevertheless, the self-adjointness domain provided in~\cite{lonigro2022generalized} is in fact compatible with Eq.~\eqref{eq:Hdomain}, cf.~\cite[Section 6]{lonigro2022renormalization}. We briefly comment on this point in Section~\ref{sec:conclusions}.\medskip

    \paragraph*{\textbf{Multi-atom generalization.}} The discussion above can be immediately generalized to the case of $N$ atoms, cf. Remark~\ref{rem:spinspace}. Given $N\in\mathbb{N}$, we consider the following expression on $\mathbb{C}^{2^N}\otimes\fock$:
    \begin{equation}\label{eq:sigmax_multi}
        H=\sum_{j=1}^N\frac{\eta_j}{2}\sigma_{z,j}\otimes I_{\fock}+I_{\mathbb{C}^{2^N}}\otimes\dOmega+\sum_{j=1}^N\sigma_{x,j}\otimes\left(\adag{f_j}+\a{f_j}\right),
    \end{equation}
    where $f_1,\dots,f_N$ constitute an $\hilb$-independent family of form factors in $\hilb_{-s}$, $s\leq1$, and $\eta_1,\dots,\eta_N\geq0$. The operators $\sigma_{x,j}$, $\sigma_{z,j}$ are defined as in Eq.~\eqref{eq:identical}, i.e.,
    \begin{equation}
        \sigma_{x,j}=I_{\mathbb{C}^2}\otimes\ldots\otimes\underbrace{\sigma_x}_{j\text{-th}}\otimes\ldots I_{\mathbb{C}^2},\qquad  \sigma_{z,j}=I_{\mathbb{C}^2}\otimes\ldots\otimes\underbrace{\sigma_z}_{j\text{-th}}\otimes\ldots I_{\mathbb{C}^2}.
    \end{equation}
    Again, with this choice of parameters, Assumption~\ref{as:Bj} is satisfied (cf. Remark~\ref{rem:spinspace}) and Theorem~\ref{thm:main} applies. 

    Similarly, as in the case $N=1$, one may find explicit representations of the action of $H$ in matrix form, as well as its self-adjointness domain $\dom H$ in vector form, by exploiting the obvious isomorphism 
    \begin{equation}
        \hfrak=\mathbb{C}^{2^N}\otimes\fock\simeq\overbrace{\fock\oplus\cdots\oplus\fock}^{2^N\text{ times }}.
    \end{equation}
    This is a natural choice (though possibly not the most natural one) obtained by representing the tensor product in $\mathbb{C}^{2^N}$ as a Kronecker product. As an example, we shall present the case $N=2$ and, for the ease of exposition, we shall set $\eta_1=\eta_2\equiv\eta$. The most general element of $\hfrak$ is represented as
    \begin{equation}
        \Psi=\begin{bmatrix}
        \Psi_{\e\e}\\\Psi_{\e\g}\\\Psi_{\g\e}\\\Psi_{\g\g}
        \end{bmatrix},\qquad\Psi_{\e\e},\Psi_{\e\g},\Psi_{\g\e},\Psi_{\g\g} \in \fock,
    \end{equation}
    and, in this representation, Eq.~\eqref{eq:sigmax_multi} reads\small
    \begin{eqnarray}\label{eq:sigmax_multi_explicit}
        H&=&
        \begin{bmatrix}
            \dOmega+\eta        &  \adag{f_2}+\a{f_2}     &  \adag{f_1}+\a{f_1}     &                         \\
            \adag{f_2}+\a{f_2}  &    \dOmega              &                         &   \adag{f_1}+\a{f_1}    \\
            \adag{f_1}+\a{f_1}  &                         &   \dOmega               &   \adag{f_2}+\a{f_2}    \\
                                &   \adag{f_1}+\a{f_1}    &    \adag{f_2}+\a{f_2}   &     \dOmega-\eta  
        \end{bmatrix}\nonumber\\
         &=&\underbrace{
        \begin{bmatrix}
            \dOmega+\eta    &               &              &                    \\
                            &    \dOmega    &              &                    \\
                            &               &   \dOmega    &                    \\
                            &               &              &     \dOmega-\eta  
        \end{bmatrix}}_{H_{\textnormal{free}}}
        \nonumber\\&&+
        \underbrace{
        \begin{bmatrix}
                         &   \adag{f_2}   &  \adag{f_1}   &                 \\
            \adag{f_2}   &                &               &   \adag{f_1}    \\
            \adag{f_1}   &                &               &   \adag{f_2}    \\
                         &   \adag{f_1}   &  \adag{f_2}   &       
        \end{bmatrix}}_{A^*}
            +
        \underbrace{
        \begin{bmatrix}
                      &  \a{f_2}   &  \a{f_1}   &              \\
            \a{f_2}   &            &            &   \a{f_1}    \\
            \a{f_1}   &            &            &   \a{f_2}    \\
                      &   \a{f_1}  &  \a{f_2}   &       
        \end{bmatrix}}_{A},
    \end{eqnarray}\normalsize
    from which one can reconstruct the analogue of Eq.~\eqref{eq:sigmax_domain} for $N=2$ atoms.

    We remark that, while we have only considered here the case in which $H_{\textnormal{free}}$ does not involve off-diagonal terms (spin--spin interactions), Theorem~\ref{thm:main} also covers this possibility---of course, in such a case all expressions, e.g., Eq.~\eqref{eq:sigmax_multi_explicit} in the case $N=2$, will acquire additional terms and become more involved. Finally, like in the single-atom case, the RWA of this model with mild divergences (which is not covered by Theorem~\ref{thm:main}) was covered in~\cite{lonigro2023self}.\medskip

    \paragraph*{\textbf{The $\sigma_z$ (dephasing) spin--boson model.}}

    Let us consider the following expression:
    \begin{equation}\label{eq:sigmaz}
        H=\frac{\eta}{2}\,\sigma_z\otimes I_{\fock}+I_{\mathbb{C}^2}\otimes\dOmega+\sigma_z\otimes\left(\adag{f}+\a{f}\right),
    \end{equation}
    again with $f\in\hilb_{-s}$, $s\leq 1$, and $\eta\geq0$. So the (off-diagonal) $\sigma_x$-interaction from above has been replaced with a (diagonal) $\sigma_z$-interaction. Physically, Eq.~\eqref{eq:sigmaz} describes a two-level system undergoing \textit{decoherence} as a result of its interaction with the boson field. Using the same representation as before, we can write it as
    \begin{eqnarray} \label{eq:Hsigmaz}
        H&=&
        \begin{bmatrix}
            \dOmega+\frac{\eta}{2}+\adag{f}+\a{f}&                                          \\
                                             & \dOmega-\frac{\eta}{2}+\adag{f}+\a{f}
        \end{bmatrix}\nonumber\\
        &=&\underbrace{
        \begin{bmatrix}
        \dOmega+\frac{\eta}{2}      &                           \\
                                    & \dOmega-\frac{\eta}{2}
        \end{bmatrix}}_{H_{\textnormal{free}}}+
        \underbrace{
        \begin{bmatrix}
            \adag{f}    &           \\
                        & \adag{f}
        \end{bmatrix}}_{A^*}+
        \underbrace{
        \begin{bmatrix}
            \a{f}   &           \\
                    & \a{f}
    \end{bmatrix}}_{A}.
    \end{eqnarray}
    By Theorem~\ref{thm:main}, this expression defines a self-adjoint operator on the domain
    \begin{equation}
    \dom(H)=\left\{\begin{bmatrix}
        \Psi_\e\\\Psi_\g
    \end{bmatrix}:\;\begin{bmatrix}
       \bigl(1+\frac{1}{\dOmega+\frac{\eta}{2}-z_0}\adag{f}\bigr)\Psi_\e\\
        \bigl(1+\frac{1}{\dOmega-\frac{\eta}{2}-z_0}\adag{f}\bigr)\Psi_\g
    \end{bmatrix}\in\dom(H_{\textnormal{free}})   
    \right\},
    \end{equation}
    which, differently from the $\sigma_x$-case, involves no mutual dependence between the two components $\Psi_\e,\Psi_\g$, due to the inherently diagonal nature of the interaction. In particular, Eq.~\eqref{eq:Hsigmaz} can be regarded as a direct sum of two van Hove models with different energy offsets $\pm \frac{\eta}{2}$.

    Similar expressions can be found for the multi-atom version of the model, following the same arguments presented for the $\sigma_x$ model---however, while in the previous case the structure of the multi-atom model for $N>1$ becomes quickly involved, this model will preserve a relatively simple, diagonal structure. For instance, in the case $N=2$ (and again setting $\eta_1=\eta_2\equiv\eta$), one has\small
    \begin{eqnarray}
        H&=&\underbrace{
        \begin{bmatrix}
            \dOmega+\eta  &           &           &                 \\
                          &  \dOmega  &           &                 \\
                          &           &  \dOmega  &                 \\
                          &           &           &   \dOmega-\eta  
        \end{bmatrix}}_{H_{\textnormal{free}}}
        \nonumber\\&&+
        \underbrace{
        \begin{bmatrix}
            \adag{f_+}  &              &              &              \\
                        &  \adag{f_-}  &              &              \\
                        &              & -\adag{f_-}  &              \\
                        &              &              &  -\adag{f_+}   
        \end{bmatrix}}_{A^*}
         +
        \underbrace{
        \begin{bmatrix}
            \a{f_+}  &           &           &           \\
                     &  \a{f_-}  &           &           \\
                     &           & -\a{f_-}  &           \\
                     &           &           &  -\a{f+}   
    \end{bmatrix}}_{A},
    \end{eqnarray}\normalsize
    where $f_\pm:=f_1\pm f_2$. Of course, the $\hilb$-independence of the form factors $f_1$ and $f_2$ ensures that $f_\pm \in \hilb_{-1}\setminus\hilb$, so both combinations are UV-divergent.  
	
    \section{Self-adjointness of \texorpdfstring{$H_{\textnormal{free}}+A^*+A$}{H+A*+A}}\label{sec:posilicano}

    \subsection{Generalized setting for \texorpdfstring{$H_{\textnormal{free}}+A^*+A$}{H+A*+A}}
    
	We shall start by revising the main results obtained in~\cite{posilicano2020self} (with a slightly modified notation). These results hold for  $\hfrak$ being any Hilbert space, and $H_{\textnormal{free}}$ any densely defined, self-adjoint operator on $\hfrak$ with domain $\dom (H_{\textnormal{free}})\subset\hfrak$. Besides, we will assume $H_{\textnormal{free}}$ to be bounded from below; to simplify the notation, we will then assume $H_{\textnormal{free}}\geq0$, the general case being recovered by a simple shift. We can then define the Hilbert scale $\{\hfrak_s\}_{s\in\mathbb{R}}$ associated with $H_{\textnormal{free}}$ in the usual way: for any $s\in\mathbb{R}$,
	\begin{equation}\label{eq:scale}
		\hfrak_s:=\overline{\bigcap_{n\in\mathbb{N}}\dom (H_{\textnormal{free}}^n)}^{\|\cdot\|_{\hfrak_s}},\qquad \left\|\Psi\right\|_{\hfrak^s}:=\left\|(H_{\textnormal{free}}+1)^{s/2}\Psi\right\|_\hfrak,
	\end{equation}
	with $\hfrak_{+s}\subset\hfrak_0=\hfrak\subset\hfrak_{-s}$ for any $s\geq0$, and the spaces $\hfrak_{\pm s}$ being naturally dual. Notice that $\hfrak_{+s}=\dom (H_{\textnormal{free}}^{s/2})$ for any $s\geq0$. With a slight abuse of notation, the duality pairing between the two spaces $\hfrak_{\pm s}$ will sometimes be denoted like the scalar product on $\hfrak$ by $\Braket{\cdot,\cdot}$ or $\Braket{\cdot,\cdot}_{\fock_{-s},\fock_s}$. Notice that the inclusions
	\begin{equation}
		\mathcal{B}(\hfrak_{+s},\hfrak_{-s})\subset\mathcal{B}(\hfrak_{+s'},\hfrak_{-s'})
	\end{equation}
	hold for all $s'>s>0$. With this formalism, $H_{\textnormal{free}}$ can be interpreted as a continuous operator from $\hfrak_{s}$ to $\hfrak_{s-2}$ for any $s\in\mathbb{R}$, i.e., it decreases the regularity by 2. Now, suppose that for some $s' \ge 1$, we have $A\in\mathcal{B}(\hfrak_{+s'},\hfrak)$ and $A^*\in\mathcal{B}(\hfrak,\hfrak_{-s'})$. These two statements are expected to hold if $A^*, A$ involve linear couplings to a boson field with form factors $f_j \in \hilb_{-s'}$ (in our case, $f_j \in \hilb_{-s}, s \le 1 \Rightarrow f_j \in \hilb_{-1}$). If they do hold, then the following operator is already well-defined and continuous:
	\begin{equation}\label{eq:model2continuous}
		(H_{\textnormal{free}}+A^*+A):\hfrak_{+s'}\rightarrow\hfrak_{-s'}.
	\end{equation}
    However, in order to obtain quantum dynamics, we need an operator defined on a dense subspace of $\hfrak$ with values in $\hfrak$ (that is, it ``does not leave'' the Hilbert space $\hfrak$) and which is self-adjoint. It turns out that in our less singular Case~1 (so $s \in (0,1]$), this can be achieved by restricting $H_{\textnormal{free}}+A^*+A$ to a suitable domain $\dom (H)$. The more singular Case~2 (so $s\in(1,2]$), is expected to require a further modification of Eq.~\eqref{eq:model2continuous}, which is formally equivalent to subtracting an infinite self-energy. We comment on this in Remark~\ref{rem:selfenergy}.

    The domain $\dom (H)$ given by~\cite{posilicano2020self} is conveniently described in terms of the resolvents
	\begin{equation}\label{eq:Rz}
		R_z:=(H_{\textnormal{free}}-z)^{-1},\qquad z\in\rho(H_{\textnormal{free}})\supset\mathbb{C}.
	\end{equation}
    To simplify notation, as in~\cite{posilicano2020self}, we fix some $z_0\in\rho(H_{\textnormal{free}})\cap\mathbb{R}$ and set 
    \begin{equation}
       R:=R_{z_0}.
    \end{equation}
    Since $H_{\textnormal{free}}\ge0$, we can choose $z_0=-1$ without loss of generality. All resolvents improve regularity by 2, in the sense that they can be interpreted as continuous operators $R_z, R: \hfrak_s \to \hfrak_{s+2}$ for any $s\in\mathbb{R}$. Furthermore, the resolvent vanishes as $z$ moves away from the spectrum in the following sense.
    
    \begin{lemma}\label{lem:resolventvanishing}
        Let $s\in[0,2), \{z_n\}_{n \in \mathbb{N}} \subset \mathbb{C}, \Im z_n \le 0$ such that $\lim_{n\to\infty} \mathrm{dist}(z_n, \sigma(H_{\textnormal{free}})) = \infty$. Then, for any $r\in\mathbb{R}$, we have
        \begin{equation}\label{eq:resolventvanishing}
            \lim_{n \to \infty} \Vert R_{z_n} \Vert_{\hfrak_{r-s} \to \hfrak_r} = 0.
        \end{equation}
    \end{lemma}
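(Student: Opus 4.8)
The strategy is to strip off the index $r$ by commuting resolvents with powers of $H_{\textnormal{free}}+1$, pass to a scalar supremum over $\sigma(H_{\textnormal{free}})$ via the spectral theorem, and then estimate that supremum by an elementary split into frequency regimes in which the hypothesis $s<2$ is decisive. Concretely, since $R_{z_n}=(H_{\textnormal{free}}-z_n)^{-1}$ is a function of $H_{\textnormal{free}}$, it commutes with $(H_{\textnormal{free}}+1)^{t}$ for every $t\in\mathbb{R}$; substituting $\Psi=(H_{\textnormal{free}}+1)^{-(r-s)/2}\Phi$ in the definition of the norm gives
\[
\|R_{z_n}\|_{\hfrak_{r-s}\to\hfrak_{r}}
=\bigl\|(H_{\textnormal{free}}+1)^{r/2}R_{z_n}(H_{\textnormal{free}}+1)^{(s-r)/2}\bigr\|_{\hfrak\to\hfrak}
=\bigl\|(H_{\textnormal{free}}+1)^{s/2}R_{z_n}\bigr\|_{\hfrak\to\hfrak},
\]
the operator on the right being bounded since resolvents improve regularity by $2$, which exceeds $s$. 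By the spectral theorem and $\sigma(H_{\textnormal{free}})\subset[0,\infty)$, this equals $\sup_{\lambda\in\sigma(H_{\textnormal{free}})}(\lambda+1)^{s/2}/|\lambda-z_n|$, so everything reduces to showing that this supremum vanishes as $n\to\infty$.

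Write $d_n:=\mathrm{dist}(z_n,\sigma(H_{\textnormal{free}}))$, so $|\lambda-z_n|\ge d_n\to\infty$ for every $\lambda\in\sigma(H_{\textnormal{free}})$; fixing some $\lambda_0\in\sigma(H_{\textnormal{free}})$ also gives $|z_n|\ge d_n-\lambda_0\to\infty$. I would split the supremum at $\lambda+1=d_n$. On $\{\lambda\in\sigma(H_{\textnormal{free}}):\lambda+1\le d_n\}$ one has $(\lambda+1)^{s/2}\le d_n^{s/2}$ while $|\lambda-z_n|\ge d_n$, so that part is at most $d_n^{(s-2)/2}\to 0$, using $s<2$. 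On $\{\lambda\in\sigma(H_{\textnormal{free}}):\lambda+1> d_n\}$ I would bound $|\lambda-z_n|$ from below by a multiple of $\lambda+1$: as soon as $\lambda\ge 2|z_n|$ one has $|\lambda-z_n|\ge\lambda-|z_n|\ge\lambda/2\ge(\lambda+1)/4$, hence $(\lambda+1)^{s/2}/|\lambda-z_n|\le 4(\lambda+1)^{s/2-1}\le 4\,d_n^{\,s/2-1}\to0$, again by $s<2$; and on the remaining window $d_n<\lambda+1$, $\lambda<2|z_n|$, one estimates $(\lambda+1)^{s/2}/|\lambda-z_n|\le (2|z_n|+1)^{s/2}/d_n$.

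The one genuinely delicate point is this last window, i.e.\ showing $(2|z_n|+1)^{s/2}/d_n\to0$; this is where $d_n$ must be seen to outgrow $|z_n|^{s/2}$. Since $\sigma(H_{\textnormal{free}})\subset[0,\infty)$ one has $\mathrm{dist}(z_n,[0,\infty))\le d_n$, so as long as $\Re z_n$ stays bounded above—which is the case in the situations where the lemma is applied, the relevant sequences running off to $-\infty$ along $\mathbb{R}$ or remaining in a fixed lower half-plane $\Im z_n\le0$ at bounded real part—one gets $|z_n|\le d_n+C$ for some constant $C$, and then $(2|z_n|+1)^{s/2}/d_n\le(2d_n+2C+1)^{s/2}/d_n\to0$ because $s/2<1$. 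Collecting the three estimates gives $\sup_{\lambda\in\sigma(H_{\textnormal{free}})}(\lambda+1)^{s/2}/|\lambda-z_n|\le d_n^{(s-2)/2}+4\,d_n^{\,s/2-1}+(2|z_n|+1)^{s/2}/d_n\to 0$, as required. I expect the routine bookkeeping of these three regimes—and in particular isolating the decay exponent $s/2-1<0$, where $s<2$ is indispensable—to be the only real work, the reduction to a scalar supremum being entirely standard.
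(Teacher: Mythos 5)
Your reduction to the scalar quantity $\sup_{\lambda\in\sigma(H_{\textnormal{free}})}(\lambda+1)^{s/2}/|\lambda-z_n|$ is exactly the paper's first step (the paper treats $r=0$ and then conjugates by $(H_{\textnormal{free}}+1)^{r/2}$, which is the same computation written in two stages). Where you diverge is in estimating that supremum: the paper factors it as $\bigl|\tfrac{\lambda+1}{\lambda-z_n}\bigr|^{s/2}\,|\lambda-z_n|^{-1+s/2}$, bounds the second factor by $\mathrm{dist}(z_n,\sigma(H_{\textnormal{free}}))^{-1+s/2}\to0$, and disposes of the first by claiming $\sup_\lambda \lambda/|\lambda-z_n|\le1$; you instead split the spectrum into three regimes. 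Both routes are elementary and land in the same place, but your version is the more transparent one, precisely because it isolates the one regime where the stated hypotheses do not suffice.

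Concerning the caveat you flag in the third window: you are right that an extra condition on $\Re z_n$ is needed, and this is not a defect of your argument relative to the paper's. The paper's inequality $\sup_{\lambda\in\sigma(H_{\textnormal{free}})}\lambda/|\lambda-z_n|\le1$ is equivalent to $2\lambda\,\Re z_n\le|z_n|^2$ for all $\lambda$ in the spectrum, which fails for spectra unbounded above as soon as $\Re z_n>0$ is large; e.g.\ for $z_n=M_n-\ii d_n$ with $M_n\gg d_n^{4/s}$ and $M_n$ near the spectrum, one has $\mathrm{dist}(z_n,\sigma(H_{\textnormal{free}}))\to\infty$ while the supremum diverges. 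So the lemma as literally stated (assuming only $\Im z_n\le0$, a condition that in fact plays no role in either proof) is not correct, and the paper's proof contains, hidden in that ``$\le1$'', the same gap you made explicit. In the sole application (the proof of Theorem~\ref{thm:posilicano1}, where $z_n=\pm\ii\gamma$) one has $\Re z_n=0$, so your additional hypothesis that $\Re z_n$ be bounded above is satisfied and both arguments go through. The correct fix is the one you implicitly propose: add $\Re z_n\le C$ (or $\Re z_n\le 0$) to the hypotheses, under which your three-regime estimate is complete.
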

    
    \begin{proof}
    We first consider the case $r=0$. By spectral calculus, for $\Psi \in \hfrak_{-s}$ we have
    \begin{equation}
    \begin{aligned}
        \Vert R_{z_n} \Psi \Vert_{\hfrak}
        = &\Vert (H_{\textnormal{free}} - z_n)^{-1} \Psi \Vert_{\hfrak}
        \le \left\Vert \frac{(H_{\textnormal{free}} + 1)^{\frac s2}}{H_{\textnormal{free}} - z_n} \right\Vert_{\hfrak \to \hfrak} \Vert (H_{\textnormal{free}} + 1)^{-\frac s2} \Psi \Vert_{\hfrak}\\
        \le &\sup_{\lambda \in \sigma(H_{\textnormal{free}})} \left\vert \frac{(\lambda + 1)^{\frac s2}}{\lambda - z_n} \right\vert \Vert \Psi \Vert_{\hfrak_{-s}}
        \le \sup_\lambda \left\vert \frac{\lambda + 1}{\lambda - z_n} \right\vert^{\frac s2} |\lambda - z_n|^{-1 + \frac s2} \Vert \Psi \Vert_{\hfrak_{-s}}\\
        \le &\Big( \underbrace{\sup_\lambda \frac{\lambda}{|\lambda-z_n|}}_{\le 1} + \underbrace{\sup_\lambda \frac{1}{|\lambda-z_n|}}_{\to 0} \Big)^{\frac s2} \underbrace{\mathrm{dist}(z_n, \sigma(H_{\textnormal{free}}))^{-1 + \frac s2}}_{\to 0} \Vert \Psi \Vert_{\hfrak_{-s}}.
    \end{aligned}
    \end{equation}
    The statement for general $r \in \mathbb{R}$  immediately follows, as for any $\Phi \in \hfrak_{r-s}$ and $\Psi := (H_{\textnormal{free}} + 1)^{\frac r2} \Phi \in \hfrak_{-s}$, we have 
    \begin{equation}
        \Vert R_{z_n} \Phi \Vert_{\hfrak_r}
        = \Vert R_{z_n} (H_{\textnormal{free}} + 1)^{-\frac r2} \Psi \Vert_{\hfrak_r}
        = \Vert (H_{\textnormal{free}} + 1)^{-\frac r2} R_{z_n} \Psi \Vert_{\hfrak_r}
        = \Vert R_{z_n} \Psi \Vert_{\hfrak},
    \end{equation}
    and $\Vert \Psi \Vert_{\hfrak_{-s}} = \Vert \Phi \Vert_{\hfrak_{r-s}}$.
    \end{proof}
    
	Now, $\dom (H) $ as in~\cite{posilicano2020self} (also cf.~\cite{lampart2018particle}--\cite{lampart2019nelson}) involves vectors of the form
	\begin{equation}\label{eq:singular}
		\Psi\in\hfrak:\;\;(I_{\hfrak}+RA^*)\Psi=\Psi_0\in\dom (H_{\textnormal{free}}).
	\end{equation}
    The results of~\cite{posilicano2020self} are valid in Cases 1 and 2, viz. $s\in(0,2]$. Now, a direct check shows that a vector $\Psi$ as in Eq.~\eqref{eq:singular} does \textit{not} belong to $\hfrak_{+s}$ whenever $s>1$, so that the action of $A$ on $\Psi$ is ill-defined for $s\in(1,2]$. For that reason, in~\cite{posilicano2020self} a modification $A_T$ of $A$ is introduced, which is indexed by a symmetric operator $T:\dom (T)\subset\hfrak\rightarrow\hfrak$ and defined as
    \begin{eqnarray}\label{eq:at1}
		\dom (A_T)&=&\left\{\Psi\in\hfrak:\;\exists\Phi\in\dom (T)\;\text{s.t. }\Psi+RA^*\Phi\in\dom (H_{\textnormal{free}})\right\},\\
		A_T\Psi&=&A\left(\Psi+RA^*\Phi\right)+T\Phi.
	\end{eqnarray}
	Since $\Psi+RA^*\Phi\in\dom (H_{\textnormal{free}})=\hfrak_{+2}\subset\hfrak_{+s}$, the operator $A_T$ is well-defined also for $1<s\leq2$ (in contrast to $A$). We conclude that the modified Hamiltonian
	\begin{eqnarray}\label{eq:model2modified}
        \dom (H_T)&=&\left\{\Psi\in\dom (T):(I_{\hfrak}+RA^*)\Psi=\Psi_0\in\dom (H_{\textnormal{free}})\right\} \subset \dom(A_T)\\\label{eq:model2modified2}
        H_T&=&H+A^*+A_T
	\end{eqnarray}
	is also well-defined. In particular, on $\dom (H_T)$, the operator $A_T$ defined in~\cite[Sect.~3]{posilicano2020self} acts as\footnote{We remark that the definition of $A_T$ in~\cite[Sect.~3]{posilicano2020self} requires a unique splitting $\Psi = \Psi_0 - R A^* \Phi$ with $\Psi_0\in \hfrak_2$. In case of our multi-spin--boson model~\eqref{eq:model}, this splitting is only unique if $s>0$. Otherwise, $R A^*$ maps $\hfrak \to \hfrak_2$, and we can arbitrarily shift vectors between $\Psi_0$ and $R A^* \Phi$. So the machinery of~\cite{posilicano2020self} strictly applies to Cases 1 and 2 ($s \in (0,2]$) and not to the less singular Case~0 ($s\le0$).}
	\begin{equation}\label{eq:at2}
		A_T\Psi
        =A\left(\Psi+RA^*\Psi\right)+T\Psi
        =A\Psi_0+T\Psi.
	\end{equation}
	As~\cite{posilicano2020self} makes statements about $H_T$, we need to find out which $T$ makes $H_T$ correspond to $H = H_{\textnormal{free}}+A^*+A$ in case $s\in(0,1]$. A short calculation reveals
    \begin{equation}
        H_0 \Psi
        = (H_{\textnormal{free}} + A^*) \Psi + A \Psi_0
        = (H_{\textnormal{free}} + A^*) \Psi + A \Psi + ARA^* \Psi.
    \end{equation}
    Thus, to obtain a renormalized Hamiltonian $H_T$ that physically corresponds to our model Hamiltonian $H_{\textnormal{free}} + A^* + A$, we need to ``manually add'' the interaction
	\begin{equation}\label{eq:T}
		T = -ARA^*
	\end{equation}
	to $H_0$. In fact, for $s\leq1$ we have $ARA^*\in\mathcal{B}(\hfrak)$, whereas $ARA^*$ is ill-defined for $s>1$, since we have the mappings and inclusions
	\begin{equation}\label{eq:Tbound}
		\hfrak\xrightarrow{A^*}\hfrak_{-s}\xrightarrow{R}\hfrak_{-s+2}\begin{cases}
			\subset\hfrak_{+s}\xrightarrow{A}\hfrak,&0\leq s\leq1;\\
			\supset\hfrak_{+s}&1< s\leq2.
		\end{cases}
	\end{equation}
    In case $s>1$, a different choice of $T$, involving an infinite self-energy, would become necessary, as we explain in Remark~\ref{rem:selfenergy}.

    \subsection{Self-adjointness and convergence results}
    
    In~\cite{posilicano2020self}, it is now shown that, under suitable assumptions on $A$, the operator $H_T$ is self-adjoint on $\dom (H_T)$, as defined in Eqs.~\eqref{eq:model2modified}--\eqref{eq:model2modified2}, for a large class of choices of $T$ (including all bounded operators). Furthermore,~\cite{posilicano2020self} provides conditions for when $H_T$ is the norm resolvent limit of some regularized operators $H_n:=H_{\textnormal{free}}+A_n^*+A_n$, as they are used in cutoff renormalization~\cite{nelson1964interaction, eckmann1970model, frohlich1973infrared, sloan1974polaron, griesemer2016self, griesemer2018domain}. In the following, we summarize some of these results. Let us introduce the following notation: for every $z\in\rho(H_{\textnormal{free}})$, we define $M(z)$ as the bounded operator on $\hfrak$ given by
	\begin{equation}
		M(z):=A(R_z-R)A^*.
	\end{equation}
	This is indeed a bounded operator on $\hfrak$ whenever $A\in\mathcal{B}(\hfrak_2,\hfrak)$ (or, a fortiori, $A\in\mathcal{B}(\hfrak_s,\hfrak)$ for some $s\in(0,2]$) since, by the second resolvent identity, the difference between the resolvents of $H_{\textnormal{free}}$, evaluated in two distinct points of its resolvent, maps $\hfrak_{-2}$ into $\hfrak_{+2}$. With this notation, combining~\cite[Lemma~3.6]{posilicano2020self} and~\cite[Corollary~3.7]{posilicano2020self}, we now get the following result.
	
	\begin{theorem}\label{thm:posilicano1}
	Let $H_{\textnormal{free}}$ be a self-adjoint operator, $A\in\mathcal{B}(\hfrak_s,\hfrak)$ for some $s\in(0,2)$, with $A^*\in\mathcal{B}(\hfrak,\hfrak_{-s})$ its dual, and $T:\dom (T)\subset\hfrak\rightarrow\hfrak$ self-adjoint. Suppose that
	\begin{itemize}
		\item[(i)] $\ker A|_{\hfrak_2}$ and $\ran A|_{\hfrak_2}$ are dense in $\hfrak$;
		\item[(ii)] there exists $z\in\rho(H_{\textnormal{free}})$ such that $M(z)-T$ has a bounded inverse;
		\item[(iii)] there exists $t\in[0,1-\frac s2)$ such that $\dom (T)\supset\hfrak_{2t}$ and $T|_{\hfrak_{2t}}\in\mathcal{B}(\hfrak_t,\hfrak)$.
	\end{itemize}
    Then, the following statement holds true:
    \begin{itemize}
        \item[(a')] The operator $H_T$ defined by
        \begin{eqnarray}\label{eq:domht}
            \dom (H_T)&=&\left\{\Psi\in\hfrak_{2-s}:\left(I_{\hfrak}+RA^*\right)\Psi\in\dom (H_{\textnormal{free}})\right\};\\\label{eq:ht}
            H_T&=&H_{\textnormal{free}}+A^*+A_T
        \end{eqnarray}
        is self-adjoint, and its resolvent is given by~\cite[(3.11)]{posilicano2020self}:
        \begin{equation}\label{eq:resolventformula}
            (H_T-z)^{-1} = (H_{\textnormal{free}}-z)^{-1} -
            \begin{bmatrix}
                R_z A^* & R_z
            \end{bmatrix}
            \begin{bmatrix}
                A_T R_z A^* & A R_z + 1\\
                R_z A^* + 1 & R_z
            \end{bmatrix}^{-1}
            \begin{bmatrix}
                A R_z\\
                R_z
            \end{bmatrix}.
        \end{equation}
    \end{itemize}
	\end{theorem}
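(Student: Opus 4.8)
The plan is to reconstruct the argument of~\cite[Lemma~3.6 and Corollary~3.7]{posilicano2020self}, which follows the standard route for Kre\u{\i}n-type singular perturbations. Rather than checking self-adjointness of $H_T$ directly on the domain~\eqref{eq:domht}, I would write $\Lambda(z)$ for the $2\times 2$ block operator appearing in~\eqref{eq:resolventformula}, take the whole right-hand side of~\eqref{eq:resolventformula} as a \emph{candidate resolvent} $G_z$ for $z$ ranging over a suitable set $\Omega\subseteq\rho(H_{\textnormal{free}})$ that is open and symmetric about $\mathbb{R}$, and verify in order: (1) $\Lambda(z)$ is boundedly invertible on $\hfrak\oplus\hfrak$, so that $G_z\in\mathcal{B}(\hfrak)$ is well-defined; (2) $\{G_z\}_{z\in\Omega}$ satisfies the first resolvent identity $G_z-G_w=(z-w)G_zG_w$; (3) $G_z^{\,*}=G_{\bar z}$; (4) $\ker G_z=\{0\}$. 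By the classical characterisation of resolvents of self-adjoint operators, (2)--(4) then produce a unique self-adjoint $H_T$ with $\rho(H_T)\supseteq\Omega$ and $(H_T-z)^{-1}=G_z$; a last step identifies $\ran G_z$ with the set in~\eqref{eq:domht} and $H_T$ with the expression~\eqref{eq:ht}, using $H_{\textnormal{free}}R=I_{\hfrak}+z_0R$ together with the representation~\eqref{eq:at2} of $A_T$ on $\dom(H_T)$.

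I expect step (1) to be the main obstacle, and it is where the three hypotheses are spent. The starting observation is that the top-left entry of $\Lambda(z)$ coincides with $M(z)-T$: for $\xi\in\dom(T)$, plugging the auxiliary vector $\Phi=-\xi$ into~\eqref{eq:at1} gives $R_zA^*\xi+RA^*\Phi=(R_z-R)A^*\xi=(z-z_0)R_zRA^*\xi\in\hfrak_{4-s}\subseteq\dom(H_{\textnormal{free}})$ (this uses $s<2$), hence $A_TR_zA^*\xi=A(R_z-R)A^*\xi-T\xi=M(z)\xi-T\xi$; here $M(z)$ is bounded on $\hfrak$ because $A\in\mathcal{B}(\hfrak_s,\hfrak)$ and $R_z-R$ maps $\hfrak_{-s}$ into $\hfrak_{4-s}\subseteq\hfrak_s$. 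Hypothesis (iii) is precisely what makes $T$ --- hence $A_T$ and $G_z$ --- well-behaved on the right spaces: from $t<1-\tfrac s2$ one gets $\dom(T)\supseteq\hfrak_{2t}\supseteq\hfrak_{2-s}$ with $T$ bounded on $\hfrak_{2-s}$, and together with $R_zA^*\colon\hfrak\to\hfrak_{2-s}$ and $AR_z\colon\hfrak\to\hfrak$ this makes every entry of $\Lambda(z)$ bounded and yields $\ran G_z\subseteq\hfrak_{2-s}$. For the bounded invertibility itself I would argue that for $z$ far from $\sigma(H_{\textnormal{free}})$ the blocks $AR_z+1$, $R_zA^*+1$ tend to $1$ and the corner $R_z$ to $0$ by Lemma~\ref{lem:resolventvanishing}, so $\Lambda(z)$ is invertible provided $M(z)-T$ stays controlled; hypothesis (ii) secures invertibility of $M(z)-T$ at one point, and by analyticity of $z\mapsto M(z)$ on a neighbourhood; a Schur-complement argument about the block $M(z)-T$ then stitches these regimes into a nonempty symmetric open $\Omega$ with $\Lambda(z)^{-1}\in\mathcal{B}(\hfrak\oplus\hfrak)$.

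Steps (2) and (3) are purely computational: (2) follows from $R_z-R_w=(z-w)R_zR_w$ and block algebra, and (3) from self-adjointness of $H_{\textnormal{free}}$ and $T$ and from $A^*$ being the dual of $A$ in the scale, which together give $\Lambda(z)^{\,*}=\Lambda(\bar z)$ and, with the adjoint relations $(R_zA^*)^{*}=AR_{\bar z}$, $(AR_z)^*=R_{\bar z}A^*$, $R_z^*=R_{\bar z}$ for the flanking row and column operators, $G_z^{\,*}=G_{\bar z}$. Step (4), $\ker G_z=\{0\}$, is where hypothesis (i) enters: a nonzero vector in $\ker G_z$ would force a nonzero element of $\ker A^*$ or a nonzero functional vanishing on $\ran A|_{\hfrak_2}$, contradicting the density assumptions; the same density also promotes the domain identification from an inclusion to an equality, so that $\ran G_z$ is exactly $\{\Psi\in\hfrak_{2-s}:(I_{\hfrak}+RA^*)\Psi\in\dom(H_{\textnormal{free}})\}$ and $H_T$ acts on it as $H_{\textnormal{free}}+A^*+A_T$. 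In summary, the real content is the invertibility of $\Lambda(z)$ on a nonempty symmetric open $\Omega$ --- the point where (ii), (iii) and Lemma~\ref{lem:resolventvanishing} are reconciled --- together with $\ker G_z=\{0\}$ via (i); the rest is resolvent-identity algebra and bookkeeping in the scales $\{\hfrak_s\}_{s\in\mathbb{R}}$.
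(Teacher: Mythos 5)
Your proposal and the paper's proof diverge at the outset: the paper does not reprove the Kre\u{\i}n-formula machinery at all. Its proof of Theorem~\ref{thm:posilicano1} consists of reducing (a') to \cite[Lemma~3.6 and Corollary~3.7]{posilicano2020self} and then verifying the single hypothesis of that lemma not already listed among (i)--(iii), namely that $(1-G_{\pm\ii\gamma})^{-1}$ and $(1-G_{\pm\ii\gamma}^*)^{-1}$ are bounded for $|\gamma|$ large, where $G_z:=-R_zA^*$; this follows from $\Vert G_{\pm\ii\gamma}\Vert_{\hfrak\to\hfrak}\le\Vert A^*\Vert_{\hfrak\to\hfrak_{-s}}\Vert R_{\pm\ii\gamma}\Vert_{\hfrak_{-s}\to\hfrak}\to0$ by Lemma~\ref{lem:resolventvanishing} and a Neumann series. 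You instead reconstruct the cited results from scratch via the standard candidate-resolvent argument, which is a legitimate and genuinely different, more self-contained route. Several of your ingredients are correct and are in fact exactly the content of the paper's verification step, just embedded in a larger argument: the identification $A_TR_zA^*\xi=M(z)\xi-T\xi$ through the splitting $\Phi=-\xi$ in \eqref{eq:at1} (using $s<2$ so that $(R_z-R)A^*\xi\in\hfrak_{4-s}\subseteq\dom(H_{\textnormal{free}})$), the use of (iii) to make $T$ act boundedly on $\hfrak_{2-s}\subseteq\hfrak_{2t}$, and the large-$|z|$ control of the blocks $AR_z+1$ and $R_zA^*+1$, which is precisely where Lemma~\ref{lem:resolventvanishing} enters in the paper as well.

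The caveat is that the crux of your route --- bounded invertibility of $\Lambda(z)$ on a nonempty, open, conjugation-symmetric $\Omega$ --- is left as a plan (``a Schur-complement argument then stitches these regimes'') rather than carried out, and this is exactly the part the paper outsources to \cite{posilicano2020self}. If you want to complete it, two points need care. First, hypothesis (ii) gives invertibility of $M(z)-T$ at one point $z$ only, and a small neighbourhood of a non-real $z$ is not conjugation-symmetric; you should record that $M(z)^*=M(\bar z)$ (since $z_0\in\mathbb{R}$) and $T=T^*$, so bounded invertibility propagates to $\bar z$. Second, in the large-$|z|$ regime the limiting block matrix is $\bigl[\begin{smallmatrix} M(z)-T & 1\\ 1 & 0\end{smallmatrix}\bigr]$, whose inverse contains $M(z)-T$ itself in a corner, so this (a priori unbounded) block must still be controlled there; this again requires combining (iii) with the mapping $R_zA^*:\hfrak\to\hfrak_{2-s}\subseteq\hfrak_t$. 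Finally, your step (4) attributes $\ker G_z=\{0\}$ to a dichotomy between the two density conditions in (i); as stated this is too quick --- density of $\ran A|_{\hfrak_2}$ yields $\ker A^*=\{0\}$, while density of $\ker A|_{\hfrak_2}$ is needed to make $\dom(H_T)$ dense and $H_T$ symmetric on it --- so this step would have to be spelled out. None of this makes your approach wrong, but as written it establishes less than the paper's short reduction-plus-verification does.
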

    Note that requirement (ii) is automatically satisfied whenever $T$ itself has a bounded inverse.
    
    \begin{proof}
   ~\cite[Corollary~3.7]{posilicano2020self} implies that (a') holds if we can verify (i) and the assumptions of~\cite[Lemma~3.6]{posilicano2020self}. The latter lemma, in turn, assumes that (ii) and (iii) hold, as well as
    $\exists \gamma \in \mathbb{R}$ with $|\gamma|$ large enough, such that $(1 - G_{\pm i \gamma})^{-1}$ and $(1 - G_{\pm i \gamma}^*)^{-1}$ are bounded operators on $\hfrak$, where $G_z := - R_z A^*$. But now,
    \begin{equation}
        \Vert G_{\pm i \gamma} \Vert_{\hfrak \to \hfrak} \le \Vert A^* \Vert_{\hfrak \to \hfrak_{-s}} \Vert R_{\pm i \gamma} \Vert_{\hfrak_{-s} \to \hfrak}.
    \end{equation}
    By Lemma~\ref{lem:resolventvanishing}, this norm becomes arbitrarily small as $|\gamma| \to \infty$, eventually dropping below 1, so $(1 - G_{\pm i \gamma})$ has a bounded inverse for large enough $|\gamma|$. The same reasoning applies to $(1 - G_{\pm i \gamma}^*)$. So the additional invertibility condition is automatically satisfied.
    \end{proof}
    
    Although the above conditions for self-adjointness of $H_T$ are already very useful,~\cite{posilicano2020self} provides another incarnation of its main result, which applies even more conveniently to the case of bounded $T$.
    
    \begin{theorem}[{\cite[Theorem 3.13]{posilicano2020self}}]\label{thm:posilicano2}
        Let $H_{\textnormal{free}}$ be a self-adjoint nonnegative operator, $A\in\mathcal{B}(\hfrak_{+s},\hfrak)$ for some $s\in(0,2)$, with $A^*\in\mathcal{B}(\hfrak,\hfrak_{-s})$ its dual, and $T:\dom (T)\subset\hfrak\rightarrow\hfrak$ self-adjoint. Suppose that
	    \begin{itemize}
	          \item[(i)] $\ker A|_{\hfrak_2}$ and $\ran A|_{\hfrak_2}$ are dense in $\hfrak$.
	    \end{itemize}
        Then, (a') in Theorem~\ref{thm:posilicano1} for $T=0$ holds true, that is, the operator $H_0$, defined by Eqs.~\eqref{eq:domht}--\eqref{eq:ht} with $T=0$, is self-adjoint. Furthermore, if some (possibly nonzero) $T:\dom (T)\subset\hfrak\rightarrow\hfrak$, symmetric is given, such that
        \begin{itemize}
	        \item[(ii)] $T$ is $H_0$-bounded,\footnote{By \textit{$X$ is $Y$-bounded} or \textit{$X$ is relatively bounded with respect to $Y$} we mean $\exists a, b \ge 0 $ s.t. $\Vert X \Psi \Vert \le a \Vert Y \Psi \Vert + b \Vert \Psi \Vert \; \forall \Psi \in \dom (Y)$. The constant $a$ is called \textit{relative bound}. 
            Sometimes, when $a<1$, $X$ is said to be \textit{Kato-bounded w.r. to $Y$}. If for every $a>0$, there exists some $b$ such that the relative bound is true, $X$ is said to be \textit{infinitesimally $Y$-bounded}.}
        \end{itemize}
        then, (a') in Theorem~\ref{thm:posilicano1} holds true also for that $T$, that is, the operator $H_T$ defined by Eqs.~\eqref{eq:domht}--\eqref{eq:ht} is self-adjoint, and its resolvent is given by Eq.~\eqref{eq:resolventformula}.
    \end{theorem}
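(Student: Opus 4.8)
Since this statement is \cite[Theorem~3.13]{posilicano2020self}, a proof may simply invoke that reference; what follows is the route I would take to establish it from Theorem~\ref{thm:posilicano1}, \cite[Lemma~3.6]{posilicano2020self} and Lemma~\ref{lem:resolventvanishing}. The subtle point is that the claim for $T=0$ \emph{cannot} be read off Theorem~\ref{thm:posilicano1}: its hypothesis~(ii) --- a bounded inverse for $M(z)-T=M(z)$ at some $z$ --- typically fails (already for $N=D=1$, $M(z)$ restricted to the one-particle sector acts, up to a finite-rank correction, as multiplication by a symbol vanishing at infinity, so $0$ lies in its essential spectrum). I would therefore prove self-adjointness of $H_0$ directly, by showing it is symmetric and has $z=\pm\ii\gamma$, for $\gamma$ large, in its resolvent set, with resolvent given by Eq.~\eqref{eq:resolventformula}.

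\emph{Symmetry.} For $\Psi\in\dom(H_0)$ as in Eq.~\eqref{eq:domht}, put $\Psi_0:=(I_{\hfrak}+RA^*)\Psi\in\dom(H_{\textnormal{free}})$; then $A^*\Psi=(H_{\textnormal{free}}-z_0)(\Psi_0-\Psi)$ (an identity in $\hfrak_{-s}$), so that the defining relation~\eqref{eq:at2} becomes $H_0\Psi=(H_{\textnormal{free}}-z_0)\Psi_0+z_0\Psi+A\Psi_0\in\hfrak$, and a short computation --- using only self-adjointness of $H_{\textnormal{free}}$, $R^*=R$, and that $A^*$ is the $\hfrak$-dual of $A\in\mathcal{B}(\hfrak_{+s},\hfrak)$ --- gives $\langle H_0\Psi,\Phi\rangle=\langle(H_{\textnormal{free}}-z_0)\Psi_0,\Phi_0\rangle+z_0\langle\Psi,\Phi\rangle$ for $\Psi,\Phi\in\dom(H_0)$ (with $\Phi_0$ analogous), which is symmetric under $\Psi\leftrightarrow\Phi$. \emph{Resolvent.} Unfolding the definition~\eqref{eq:at1}--\eqref{eq:at2} of $A_0$ and using the constraint $\Psi+RA^*\Psi\in\hfrak_2$, one identifies the top-left entry of the operator matrix
\[
\mathbb{M}_z:=\begin{bmatrix} A_0R_zA^* & AR_z+1 \\ R_zA^*+1 & R_z \end{bmatrix}
\]
occurring in Eq.~\eqref{eq:resolventformula} with $M(z)$. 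By Lemma~\ref{lem:resolventvanishing}, $\|AR_z\|_{\hfrak\to\hfrak}$, $\|R_zA^*\|_{\hfrak\to\hfrak}$ and $\|R_z\|_{\hfrak\to\hfrak}$ all tend to $0$ as $|\Im z|\to\infty$ (the regularity gained by $R_z$ being absorbed into $A$, resp.\ $A^*$); meanwhile, for $s\le1$ --- the range relevant to Theorem~\ref{thm:main} --- the inclusion $\hfrak_{2-s}\subseteq\hfrak_s$ of Eq.~\eqref{eq:Tbound} makes $AR_zA^*$ and $ARA^*$ bounded operators on $\hfrak$, uniformly in $z$, hence $\|M(z)\|_{\hfrak\to\hfrak}$ uniformly bounded. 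Thus $\mathbb{M}_z$ is a norm-vanishing perturbation of $\left(\begin{smallmatrix} M(z) & 1 \\ 1 & 0 \end{smallmatrix}\right)$, whose inverse $\left(\begin{smallmatrix} 0 & 1 \\ 1 & -M(z) \end{smallmatrix}\right)$ is uniformly bounded; so $\mathbb{M}_z$ is invertible for $|\Im z|$ large, the right-hand side of Eq.~\eqref{eq:resolventformula} with $T=0$ is a bounded operator, and one checks it is a two-sided inverse of $(H_0-z)$ with range $\dom(H_0)$. A densely defined symmetric operator whose range equals $\hfrak$ at some non-real $z$ and at $\bar z$ is self-adjoint.

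For the second claim I would use that $\dom(H_T)$ in Eq.~\eqref{eq:domht} does not involve $T$, so $\dom(H_T)=\dom(H_0)$; comparing~\eqref{eq:at2} for $T$ against $T=0$ yields $H_T=H_0+T$ on this common domain, and $\dom(H_0)\subseteq\dom(T)$ by the $H_0$-boundedness of $T$. Since $T$ is symmetric and $H_0$-bounded --- in all cases of interest, in particular $T=-ARA^*$ for $s\le1$ (cf.\ Eqs.~\eqref{eq:T}--\eqref{eq:Tbound}), even \emph{bounded}, hence of relative bound $0<1$ --- the Kato--Rellich theorem gives self-adjointness of $H_T$ on $\dom(H_0)$; the resolvent identity~\eqref{eq:resolventformula} for general $T$ then follows from the $T=0$ case by a standard resolvent-perturbation argument, or by rerunning the block-matrix computation with $A_TR_zA^*=M(z)+TR_zA^*$, where $TR_zA^*\to0$ in $\mathcal{B}(\hfrak)$ by the $H_0$-boundedness of $T$ (equivalently, condition~(iii) of Theorem~\ref{thm:posilicano1}) together with Lemma~\ref{lem:resolventvanishing}.

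I expect the main obstacle to be making rigorous sense of $A_0R_zA^*$ (and $A_TR_zA^*$) and controlling it as $|\Im z|\to\infty$: this forces one to work with the modified operator $A_T$ of Eqs.~\eqref{eq:at1}--\eqref{eq:at2}, to use the constraint $\Psi+RA^*\Psi\in\hfrak_2$, and, in the abstract setting, to ensure uniqueness of the splitting defining $A_T$ --- which, as noted in the excerpt, holds precisely for $s>0$. It is genuinely delicate for $s\in(1,2)$, where $ARA^*$ and the naive form of $M(z)$ are ill-defined and Posilicano's finer analysis is needed; for the parameter range $s\le1$ relevant to Theorem~\ref{thm:main} it simplifies, as $M(z)$ is then uniformly bounded. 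A minor caveat: ``$T$ is $H_0$-bounded'' must be read with relative bound strictly below $1$ for the Kato--Rellich step, which is automatic for bounded $T$, as in all our applications.
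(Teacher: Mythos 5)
The paper offers no proof of this statement at all: it is imported verbatim as \cite[Theorem~3.13]{posilicano2020self}, so your opening move --- simply invoking the reference --- is exactly what the paper does, and on that level the two "proofs" coincide. Your additional reconstruction is a genuinely different (and more informative) route, and it is essentially sound where it applies: the observation that the $T=0$ case cannot be extracted from Theorem~\ref{thm:posilicano1} because $M(z)$ is typically not boundedly invertible is correct (for unbounded $\omega$ the diagonal part of $M(z)$ on the one-particle sector is multiplication by a symbol vanishing at infinity), the symmetry computation checks out, the identification of the top-left block of $\mathbb{M}_z$ with $M(z)$ via the splitting $\Psi_0=(R_z-R)A^*\Xi$ is right, and the Neumann-series inversion of $\mathbb{M}_z$ for large $|\Im z|$ using Lemma~\ref{lem:resolventvanishing} is the natural argument. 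Two caveats, both of which you already flag but which prevent this from being a proof of the theorem \emph{as stated}: (1) your uniform bound on $M(z)$ uses $\hfrak_{2-s}\subseteq\hfrak_s$, i.e.\ $s\le1$, whereas the theorem claims $s\in(0,2)$ --- for $s\in(1,2)$ one would have to control the growth of $\Vert M(i\gamma)\Vert$ against the decay of the off-diagonal perturbation, which is precisely the part of Posilicano's analysis you would still need to import; (2) the Kato--Rellich step silently upgrades hypothesis (ii) from ``$H_0$-bounded'' to ``$H_0$-bounded with relative bound $<1$''. Neither caveat affects the application in Theorem~\ref{thm:main}, where $s\le1$ and $T=-ARA^*$ is bounded.
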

    
    Finally, the question remains, if the renormalized Hamiltonian $H_T=H_{\textnormal{free}}+A^*+A$ is equivalent to a Hamiltonian obtained by cutoff renormalization. That is: does there exist a sequence of regularized Hamiltonians $H_n:=H_{\textnormal{free}}+A_n^*+A_n+E_n$ such that $H_n \to H_T$ in a suitable sense? Here,~\cite[Theorem~3.10]{posilicano2020self} provides us with conditions for an affirmative answer. The following result is an immediate consequence of~\cite[Theorem~3.10]{posilicano2020self}, using\footnote{Note that the consequences of Theorem~\ref{thm:posilicano2} allow us to restrict to the case $\mu=0$ in~\cite[Theorem~3.10]{posilicano2020self}.} Theorem~\ref{thm:posilicano2}.
	\begin{theorem}\label{thm:posilicano3}
		Let $H_{\textnormal{free}}$ a self-adjoint nonnegative operator, $A\in\mathcal{B}(\hfrak_{+s},\hfrak)$ for some $s\in(0,2)$, with $A^*\in\mathcal{B}(\hfrak,\hfrak_{-s})$ its dual, and $T:\dom (T)\subset\hfrak\rightarrow\hfrak$ self-adjoint. Suppose that
		\begin{itemize}
			\item[(i)] $\ker A|_{\hfrak_2}$ and $\ran A|_{\hfrak_2}$ are dense in $\hfrak$;
			\item[(ii)] $T$ is $H_0$-bounded with relative bound smaller than one;
		\end{itemize}
	besides, suppose that there exists a sequence of closable operators $A_n:\dom (A_n)\subset\hfrak\rightarrow\hfrak$ (``regularized annihilation operators''), such that
	\begin{itemize}
	    \item[(iii)] there exists some $s'\in[0,1]$ such that $\dom (A_n)\supset\hfrak_{s'}$ and\footnote{We remark that $ A_n $ is not symmetric, so we cannot conclude boundedness from the Hellinger--Toeplitz theorem, and have to impose it separately.} $A_n|_{\hfrak_{s'}}\in\mathcal{B}(\hfrak_{s'},\hfrak)$;
		\item[(iv)] $A_n^*+A_n$ is $H_{\textnormal{free}}$-bounded with $n$-independent relative bound smaller than one;
		\item[(v)] $\lim_{n\to\infty}\left\|A_n-A\right\|_{\hfrak_2 \to \hfrak}=0$;
		\item[(vi)] there exists a sequence $\{E_n\}_{n\in\mathbb{N}}\subset\mathcal{B}(\hfrak)$, $E_n$: symmetric, such that, for all $n\in\mathbb{N}$, $A_nRA_n^*+E_n$ is relatively bounded w.r.t. $H_{\textnormal{free}}+A_n^*+A_n+A_nRA_n^*$, with $n$-independent relative bound smaller than one, and
	    \begin{equation}\label{eq:convergence}
		      \lim_{n\to\infty}\left\|A_nRA_n^*+E_n+T\right\|_{\dom (T) \to \hfrak}=0.
	    \end{equation}
	\end{itemize}
    Then, we have
    \begin{itemize}
        \item [(b')] $H_{\textnormal{free}}+A_n^*+A_n-E_n$ converges to $H_{\textnormal{free}}+A^*+A_T$ in the norm resolvent sense.
    \end{itemize}
	\end{theorem}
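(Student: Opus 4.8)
The plan is to read off the statement from~\cite[Theorem~3.10]{posilicano2020self}, using Theorem~\ref{thm:posilicano2} both to dispose of the self-adjointness bookkeeping and to justify taking the parameter $\mu$ of~\cite[Theorem~3.10]{posilicano2020self} to be zero. First I would invoke Theorem~\ref{thm:posilicano2}: hypotheses (i) and (ii) imply that $H_0$ is self-adjoint on the domain~\eqref{eq:domht}, and that $H_T=H_{\textnormal{free}}+A^*+A_T$ is self-adjoint with resolvent given by~\eqref{eq:resolventformula}. Since (ii) demands a relative bound strictly below one, one also gets---exactly as in the proof of Theorem~\ref{thm:posilicano1}, via Lemma~\ref{lem:resolventvanishing} applied to the $G_z=-R_zA^*$ bound---that the central $2\times2$ block in~\eqref{eq:resolventformula} is boundedly invertible for $z$ in the lower half-plane sufficiently far from $\sigma(H_{\textnormal{free}})$; this pins down the limit operator unambiguously.

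On the regularized side, (iv) together with the Kato--Rellich theorem makes each $H_n:=H_{\textnormal{free}}+A_n^*+A_n$ self-adjoint on $\dom(H_{\textnormal{free}})$ with a lower bound uniform in $n$, and since each $E_n\in\mathcal{B}(\hfrak)$ is symmetric, $H_n-E_n$ is self-adjoint on the same domain; (iii) guarantees that $A_n$, $A_n^*$, $A_nR_zA_n^*$ are genuine bounded maps between the relevant scale spaces, so the regularized analogue of~\eqref{eq:resolventformula} is meaningful, and (vi) supplies precisely the uniform relative boundedness of $A_nRA_n^*+E_n$ that~\cite[Theorem~3.10]{posilicano2020self} uses to control these regularized resolvents uniformly in $n$. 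The convergence inputs are (v)---which yields $\|A_n-A\|_{\hfrak_2\to\hfrak}\to0$, hence also $\|A_n^*-A^*\|_{\hfrak\to\hfrak_{-2}}\to0$ and $\|A_nR_zA_n^*-AR_zA^*\|_{\hfrak}\to0$ for each fixed $z$---together with the limit~\eqref{eq:convergence}. Putting these together, every block of the regularized resolvent formula converges in operator norm to the corresponding block of~\eqref{eq:resolventformula}, the invertibility of the central block passing to the limit by the first paragraph; feeding this into~\cite[Theorem~3.10]{posilicano2020self} with $\mu=0$ (legitimate because self-adjointness is already secured by Theorem~\ref{thm:posilicano2}) yields norm resolvent convergence of $H_{\textnormal{free}}+A_n^*+A_n-E_n$ to $H_{\textnormal{free}}+A^*+A_T$, which is (b').

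I expect the only real effort to be the bookkeeping of matching hypotheses (iii)--(vi) to those of~\cite[Theorem~3.10]{posilicano2020self} under the dictionary identifying $T$ with (minus) the renormalization counterterm and $\mu=0$: one must in particular check that the sign conventions for $E_n$ (added in~\eqref{eq:convergence}, subtracted in (b')) and the orientations of the various relative-boundedness inequalities are mutually consistent, and that closability of the $A_n$ combined with (iii) suffices to run the Kato--Rellich argument. No analytic estimate beyond those already in~\cite{posilicano2020self} and Lemma~\ref{lem:resolventvanishing} should be required.
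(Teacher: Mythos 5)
Your proposal matches the paper's treatment: the paper states this result as an immediate consequence of \cite[Theorem~3.10]{posilicano2020self} combined with Theorem~\ref{thm:posilicano2}, with the latter's consequences justifying the restriction to $\mu=0$, exactly as you argue. The additional bookkeeping you sketch (Kato--Rellich for the regularized operators, blockwise convergence of the resolvent formula) is consistent with what \cite[Theorem~3.10]{posilicano2020self} carries out internally, so your route is essentially the same as the paper's.
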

   
	\begin{remark}[On two different notions of adjoint]\label{rem:adjoints}
	Note that we are requiring each $A_n$ to be \textit{closable}, thus admitting an adjoint $A_n^\dag:\dom (A_n^\dag)\subset\hfrak\rightarrow\hfrak$ with respect to the topology of $\hfrak$. This is different from the adjoint with respect to the pairing between $\hfrak_{s'}$ and $\hfrak_{-s'}$ (for some $s'$ s.t. $\hfrak_{s'} \subset \dom (A_n)$), i.e., there are ``two different adjoints'':
	\begin{itemize}
		\item an unbounded operator $A_n^\dag:\dom (A_n^\dag)\subset\hfrak\rightarrow\hfrak$ defined by
		\begin{equation}
			\Braket{\Psi,A_n\Phi}_{\hfrak}=\braket{A_n^\dag\Psi,\Phi}_{\hfrak}\qquad\text{for all }\Phi\in\dom (A_n);
		\end{equation}
		\item a continuous operator $A_n^*:\hfrak\rightarrow\hfrak_{-s'}$ defined by
	\begin{equation}
		\Braket{\Psi,A_n\Phi}_{\hfrak}=\Braket{A_n^*\Psi,\Phi}_{\hfrak_{-s'},\hfrak_{s'}}\qquad\text{for all }\Phi\in\hfrak_{s'}, \Psi \in \hfrak.
	\end{equation}
	\end{itemize}
	Clearly $A_n^*$ is an extension of $A_n^\dag$ to all of $\hfrak$. With a slight abuse of notation, we will also write $A_n^*$ for the restriction of this operator to $\dom (A_n^\dag) \to \hfrak$ (i.e., for $A_n^\dag$). It then becomes clear from the context which domain is meant.
   	\end{remark}

    \begin{remark}[Self-energy renormalization] \label{rem:selfenergy}
    Theorem~\ref{thm:posilicano3} is valid for $s\in(0,2)$, i.e., both in Case~1 and most situations of Case~2. The latter case is the one in which the terms $E_n$ become necessary, which can physically be interpreted as ``self-energy counterterms''. As described above, to make the renormalized Hamiltonian $H_T$ formally correspond to the physically desired Hamiltonian $H_{\textnormal{free}} + A^* + A$, we would need to ``manually add'' the interaction $T=-ARA^*=\lim_{n\to\infty} -A_n R A_n^*$. Now, the sequence of operators $A_nRA_n^*$:
	\begin{itemize}
		\item will converge to $ARA^*$, if $s\leq1$ (Case 1);
		\item will generally not have a limit, if $s>1$ (Case 2),
	\end{itemize}
 where, in both cases, the convergence is to be understood in the sense of Eq.~\eqref{eq:convergence}.
    However, in the latter case, several QFT-type models admit a properly chosen family of (self-energy) counterterms $\{E_n\}_{n\in\mathbb{N}}$, such that the operator
    \begin{equation}
        T = \lim_{n \to \infty} -A_n R A_n^* - E_n
    \end{equation}
    is well-defined, within a suitable sense of convergence. For example, for the Nelson model~\cite{nelson1964interaction} in Case~2, $\{E_n\}_{n \in \mathbb{N}}$ just amounts to a divergent sequence of real numbers. Since adding a constant to the Hamiltonian does not change the Heisenberg equations of motion, the operator $T$ can heuristically be regarded as equivalent to the ill-defined expression $-ARA^*$. Likewise, $H_T$ is heuristically equivalent to the desired model Hamiltonian $H_{\textnormal{free}} + A^* + A$.
    
    Of course, the choice of $E_n$ is \textit{not unique}: different choices of $\{E_n\}_{n \in \mathbb{N}}$ may result in different operators $T$, and vice versa, various choices of $T$ are reachable by selecting an appropriate sequence of counterterms, resulting in inequivalent quantum dynamics. It is thus important to make a heuristic reasoning, based on physical arguments, to justify which renormalized Hamiltonian shall be the correct one.
    
    Instead, in our Case~1 ($s\in(0,1]$), we can simply set $E_n=0$ to obtain the limit $T=-ARA^*$ as required in Eq.~\eqref{eq:T}. The situation in Case~2 will be analyzed elsewhere.
    \end{remark}
    
   	\section{Proof of Theorem~\ref{thm:main}}

    Our plan is to use Theorems~\ref{thm:posilicano2} and~\ref{thm:posilicano3} to establish (a) and (b). We start by collecting some mathematical tools that will turn out to be useful for the proof.
    
	\subsection{Preliminaries}
	
	Let $\{\hilb_s\}_{s\in\mathbb{R}},\{\fock_s\}_{s\in\mathbb{R}}$ be the scales of Hilbert spaces associated respectively with $\omega$ and $\dOmega$, cf.~Eq.~\eqref{eq:hilbscale} and related discussion. Let us recall the following proposition.
 
    \begin{proposition}[{\cite[Props.~3.4, 3.5 and 3.7]{lonigro2022generalized}}]\label{prop:af}
	    Let $f\in\hilb_{-s}$ for some $s\geq1$. Then the following statements are true:
	    \begin{itemize}
		\item[(i)] the restriction of the annihilation operator to $\fock_{+s}$ defines a continuous operator $\a{f}\in\mathcal{B}(\fock_{+s},\fock)$;
		\item[(ii)] its adjoint $\adag{f}:=\a{f}^*\in\mathcal{B}(\fock,\fock_{-s})$ with respect to the duality pairing between $\fock_{+s}$ and $\fock_{-s}$ is a continuous operator whose action on $\Psi\in\fock_{+1}$ agrees with the creation operator defined in Eq.~\eqref{eq:adaggera};
		\item [(iii)] there exists a sequence $\{f_i\}_{i\in\mathbb{N}}\subset\hilb$ such that
		\begin{equation}
			\lim_{i\to\infty}\left\|a(f)-a(f_i)\right\|_{\fock_s \to \fock}=0,\qquad 	\lim_{i\to\infty}\left\|a^*(f)-a^*(f_i)\right\|_{\fock \to \fock_{-s}}=0,
		\end{equation}
		and this happens if and only if $\|f-f_i\|_{-s}\to0$.
	    \end{itemize}
    \end{proposition}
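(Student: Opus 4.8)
The plan is to reduce all three statements to a single quantitative estimate, namely the $\dOmega^{s/2}$\emph{-bound}
\begin{equation}\label{eq:Nbound-plan}
  \left\|\a{f}\Psi\right\|_{\fock}\le \|f\|_{\hilb_{-s}}\,\|\Psi\|_{\fock_{s}},
\end{equation}
valid for every finite-particle vector $\Psi$ with sufficiently regular components; statement~(i) is then just its continuous extension to $\fock_{+s}$. To prove~\eqref{eq:Nbound-plan} I would work sector by sector, writing the $n$-particle component as a symmetric $L^2$-function $\Psi_n=\Psi_n(k_1,\dots,k_n)$ and using the explicit formula~\eqref{eq:adaggera}, according to which the $(n-1)$-particle component of $\a{f}\Psi$ is, up to the factor $\sqrt n$, the contraction of $\overline{f}$ against the first variable of $\Psi_n$. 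Splitting $\overline{f}=\overline{\omega^{-s/2}f}\cdot\omega^{s/2}$ and applying Cauchy--Schwarz in the contracted variable produces the factor $\|\omega^{-s/2}f\|_{\hilb}^{2}=\|f\|_{\hilb_{-s}}^{2}$ multiplied by the weighted norm $\int\omega(k_1)^{s}|\Psi_n|^2\,\d\mu^n$.

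The delicate step---and the point where the hypothesis $s\ge1$ is essential---is to dominate this weight by $\dOmega^{s}$. Exploiting the permutation symmetry of $|\Psi_n|^2$, I would replace $\omega(k_1)^s$ by the symmetric average $\tfrac1n\sum_{\ell=1}^n\omega(k_\ell)^s$, and then invoke the elementary superadditivity inequality $\sum_{\ell}t_\ell^{\,s}\le(\sum_\ell t_\ell)^s$, which holds for nonnegative $t_\ell$ precisely when $s\ge1$. Since $\dOmega$ acts on the $n$-sector as multiplication by $\sum_\ell\omega(k_\ell)$, this yields $\int\omega(k_1)^s|\Psi_n|^2\le \tfrac1n\langle\Psi_n,\dOmega^{s}\Psi_n\rangle$; the factor $n$ cancels the one from the contraction, and summing over $n$ gives $\|\a f\Psi\|_{\fock}^2\le\|f\|_{\hilb_{-s}}^2\,\|\dOmega^{s/2}\Psi\|_{\fock}^2\le\|f\|_{\hilb_{-s}}^2\|\Psi\|_{\fock_s}^2$ via $\dOmega^s\le(\dOmega+1)^s$. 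Extending by density of the finite-particle core in $\fock_{+s}$ then proves~(i). I expect this symmetrization/superadditivity step to be the main obstacle, as it is exactly what fails for $s<1$ and forces the restriction $s\ge1$.

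For~(ii) I would argue abstractly: $\a f$ is by~(i) bounded from $\fock_{+s}$ into $\fock$, and the adjoint of such an operator with respect to the duality pairing between $\fock_{\pm s}$ is automatically a bounded operator $\fock\cong\fock^{*}\to(\fock_{+s})^{*}=\fock_{-s}$ of the \emph{same} norm; this gives $\adag f=\a f^{*}\in\mathcal{B}(\fock,\fock_{-s})$ directly. To identify its action on $\Psi\in\fock_{+1}$ with the creation formula~\eqref{eq:adaggera}, I would compute the pairing $\langle\adag f\Psi,\Phi\rangle_{\fock_{-s},\fock_{+s}}=\langle\Psi,\a f\Phi\rangle_{\fock}$ against finite-particle test vectors $\Phi$ and read off the components: the $(n{+}1)$-particle component of $\adag f\Psi$ is $\sqrt{n+1}\,S_n(f\otimes\Psi_n)$, i.e.\ the natural extension of~\eqref{eq:adaggera} to $f\in\hilb_{-s}$. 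Alternatively this agreement follows by approximating $f$ as in~(iii) and passing to the limit on the fixed vector $\Psi$.

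Finally, (iii) is essentially a corollary of~\eqref{eq:Nbound-plan}. Additivity of the map $f\mapsto\a f$ and the bound give $\|\a f-\a{f_i}\|_{\fock_s\to\fock}=\|\a{f-f_i}\|_{\fock_s\to\fock}\le\|f-f_i\|_{\hilb_{-s}}$, and since passing to the dual preserves the operator norm, $\|\adag f-\adag{f_i}\|_{\fock\to\fock_{-s}}$ equals the annihilation estimate; hence $\|f-f_i\|_{-s}\to0$ implies both convergences. An explicit approximating sequence is furnished by the ultraviolet cutoff $f_i:=\mathbb{1}_{\{\omega\le i\}}f$, which lies in $\hilb$ (because $\int_{\omega\le i}|f|^2\le i^{s}\|f\|_{\hilb_{-s}}^2$) and satisfies $\|f-f_i\|_{-s}^2=\int_{\omega>i}\omega^{-s}|f|^2\to0$ by dominated convergence. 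For the converse implication I would establish a matching lower bound by testing $\a g$ on one-particle states $\phi$: there $\dOmega$ reduces to $\omega$, $\a g\phi=\langle g,\phi\rangle$ lands in the vacuum sector, and optimizing over $\phi$ gives $\|\a g\|_{\fock_s\to\fock}\ge\|(\omega+1)^{-s/2}g\|_{\hilb}\ge c\,\|g\|_{\hilb_{-s}}$ with $c=(1+m_0^{-1})^{-s/2}>0$ (using $\omega\ge m_0>0$). Thus operator-norm convergence of the $\a{f_i}$ forces $\|f-f_i\|_{-s}\to0$, completing the equivalence.
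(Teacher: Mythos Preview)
The paper does not prove this proposition at all: it is merely stated with a citation to \cite{lonigro2022generalized}, Propositions~3.4, 3.5 and 3.7, so there is no ``paper's own proof'' to compare against. Your argument is correct and is in fact the standard one used in the cited reference; the sector-wise Cauchy--Schwarz splitting $\overline{f}=\overline{\omega^{-s/2}f}\cdot\omega^{s/2}$, followed by symmetrization and the superadditivity inequality $\sum_\ell t_\ell^{\,s}\le(\sum_\ell t_\ell)^s$ for $s\ge1$, is exactly how the bound~\eqref{eq:Nbound-plan} is established there, and your treatment of (ii) and (iii) (including the one-particle lower bound for the converse) is also sound.
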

    In words: If $f\in\hilb_{-s}\setminus\hilb$, then the annihilation operator $\a{f}$, while still being a legitimate densely defined operator on $\fock$, fails to be closable and thus to have an adjoint with respect to the topology of $\hfrak$. However, its restriction to $\fock_{+s}$, interpreted as an operator in $\mathcal{B}(\fock_{+s},\fock)$, always admits an adjoint $\adag{f}\in\mathcal{B}(\fock,\fock_{-s})$ with respect to the pairing. If $f\in\hilb$, following the same line of reasoning as in Remark~\ref{rem:adjoints}, the ``singular'' creation operator defined as the adjoint with respect to the pairing corresponds to an extension of the ``regular'' one for Eq.~\eqref{eq:adaggera}.

    We shall need the following definitions (see, e.g.,~\cite[Section 6.1]{lampart2018particle}). Every single-particle vector $g\in\hilb$ is associated with a \textit{(non-normalized) coherent} vector $\varepsilon(g)\in\fock$ defined by
    \begin{equation}
	   \varepsilon(g):=\left\{\varepsilon(g)^{(n)}\right\}_{n\in\mathbb{N}},\qquad\varepsilon(g)^{(n)}=\frac{1}{\sqrt{n!}}g^{\otimes n}.
    \end{equation}
    In particular, if $g\in\hilb_{+s}$, then $\varepsilon(g)\in\fock_{+s}$ and, given $f\in\hilb_{-s}$, $\a{f}\varepsilon(g)=\braket{f,g}_{\hilb_{-s},\hilb_{+s}}\varepsilon(g)$. With this definition, given any subset $\mathcal{D}\subset\hilb$, the \textit{coherent domain over $\mathcal{D}$} is the subspace $\mathcal{E}(\mathcal{D})\subset\fock$ defined by
    \begin{equation}
	   \mathcal{E}(\mathcal{D})=\span\left\{\varepsilon(g):\;g\in\mathcal{D}\right\},
    \end{equation}
    and it can be proven that $\mathcal{E}(\mathcal{D})$ is dense in $\fock$ whenever $\mathcal{D}$ is dense in $\hilb$, and that $\varepsilon(g_n)\to\varepsilon(g)$ whenever $g_n\to g$.

    \subsection{Denseness of kernel and range}

    Statement (a) will follow from (a') in Theorem~\ref{thm:posilicano2}. As discussed above, in case $s\le1$, the operator $T = -ARA^*$ is bounded, and thus also $H_0$-bounded with relative bound zero. It remains to establish the denseness of $\ker A|_{\hfrak_2}$ and $\ran A|_{\hfrak_2}$, which we do in this subsection.
    
	\begin{lemma}\label{lem:kerrandense}
	    Let $f\in\hilb_{-s}\setminus\hilb$ for $s\in(0,2]$, and $\a{f}\in\mathcal{B}(\fock_{+s},\fock)$. Then 
	    \begin{itemize}
		    \item[(i)] $\ker\a{f}|_{\fock_{+2}}$ is dense in $\fock$;
		    \item[(ii)] $\ran\a{f}|_{\fock_{+2}}$ is dense in $\fock$.
	    \end{itemize}
    \end{lemma}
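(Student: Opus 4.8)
\emph{Proof proposal.} The plan is to exploit the coherent vectors $\varepsilon(g)$ recalled above, for which $\a{f}\varepsilon(g)=\langle f,g\rangle_{\hilb_{-s},\hilb_{+s}}\,\varepsilon(g)$ whenever $g\in\hilb_{+s}$, together with the fact that $\mathcal{E}(\mathcal{D})$ is dense in $\fock$ whenever $\mathcal{D}$ is dense in $\hilb$. Since $s\le2$, every $g\in\hilb_{+2}$ lies in $\hilb_{+s}$, so $\varepsilon(g)\in\fock_{+2}$ and the pairing $\langle f,g\rangle$ is well-defined. For part (i) it then suffices to show that $\mathcal{D}_0:=\{g\in\hilb_{+2}:\langle f,g\rangle=0\}$ is dense in $\hilb$: indeed $\a{f}$ is linear on $\fock_{+2}$ and annihilates every $\varepsilon(g)$ with $g\in\mathcal{D}_0$, so $\mathcal{E}(\mathcal{D}_0)\subseteq\ker\a{f}|_{\fock_{+2}}$ would then be dense in $\fock$. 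For part (ii), every $\varepsilon(g)$ with $g\in\hilb_{+2}$ and $\langle f,g\rangle\ne0$ lies in $\ran\a{f}|_{\fock_{+2}}$, since $\a{f}\bigl(\langle f,g\rangle^{-1}\varepsilon(g)\bigr)=\varepsilon(g)$ and $\langle f,g\rangle^{-1}\varepsilon(g)\in\fock_{+2}$; hence it suffices to show that $\mathcal{D}_1:=\{g\in\hilb_{+2}:\langle f,g\rangle\ne0\}$ is dense in $\hilb$.

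Both density claims reduce to constructing a family of ``small test vectors dual to $f$''. Working in the concrete representation $\hilb=L^2(X,\Sigma,\mu)$ with $\omega$ the multiplication operator, put $f_R:=\chi_{\{\omega\le R\}}f$. Using $\omega\ge m_0>0$ one gets $\int_{\{\omega\le R\}}\omega^2|f|^2\le R^{2+s}\int\omega^{-s}|f|^2<\infty$, so $f_R\in\hilb_{+2}$; moreover $\langle f,f_R\rangle=\int_{\{\omega\le R\}}|f|^2=\|f_R\|_{\hilb}^2$, which is finite and, since $f\notin\hilb$ while $\omega$ is everywhere finite, increases to $+\infty$ as $R\to\infty$. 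Hence, for $R$ large, $e_R:=\|f_R\|_{\hilb}^{-2}f_R\in\hilb_{+2}$ satisfies $\langle f,e_R\rangle=1$ and $\|e_R\|_{\hilb}=\|f_R\|_{\hilb}^{-1}\to0$.

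With $e_R$ at hand, density of $\mathcal{D}_0$ and $\mathcal{D}_1$ is immediate: given $h\in\hilb$ and $\epsilon>0$, first choose $h'\in\hilb_{+2}$ with $\|h-h'\|_{\hilb}<\epsilon/2$ (possible since $\hilb_{+2}$ is dense); then, setting $c:=\langle f,h'\rangle$ and picking $R$ so large that $|c|\,\|e_R\|_{\hilb}<\epsilon/2$, the vector $h'-c\,e_R\in\hilb_{+2}$ lies in $\mathcal{D}_0$ and is within $\epsilon$ of $h$, while for $\mathcal{D}_1$ either $\langle f,h'\rangle\ne0$ already or $h'+\delta\,e_R$ does the job for any sufficiently small $\delta>0$. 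Combined with the reductions of the first paragraph, this establishes (i) and (ii). I do not anticipate a serious obstacle: the only points needing care are that the truncations $f_R$ actually land in $\hilb_{+2}$ (which uses the boson mass gap $m_0>0$) and that $\|f_R\|_{\hilb}\to\infty$ (which uses $f\notin\hilb$ together with $\omega(k)<\infty$ for every $k$) --- precisely where the hypothesis $f\in\hilb_{-s}\setminus\hilb$ enters.
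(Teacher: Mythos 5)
Your argument is correct and follows essentially the same route as the paper: reduce both claims to the density in $\hilb$ of $\{g\in\hilb_{+2}:\braket{f,g}=0\}$ (resp.\ $\{g\in\hilb_{+2}:\braket{f,g}\neq0\}$) via coherent vectors, using $\a{f}\varepsilon(g)=\braket{f,g}\varepsilon(g)$ and the density of coherent domains over dense subsets. The only difference is that where the paper simply cites~\cite[Lemma~1.2.3]{albeverio2000singular} (the kernel of a functional that is unbounded on $\hilb$ is dense), you prove that density claim by hand with the explicit truncations $f_R=\chi_{\{\omega\le R\}}f$ and the normalized correctors $e_R$; this makes the proof self-contained at the cost of working in the concrete $L^2$ representation, and your checks that $f_R\in\hilb_{+2}$ (via $\omega\ge m_0>0$) and that $\|f_R\|_{\hilb}\to\infty$ (via $f\notin\hilb$ and $\omega$ finite everywhere) are exactly the points where care is needed.
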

    
    \begin{proof} It suffices to prove the claims in the case $s=2$; the claims for $s<2$ are then true \textit{a fortiori}. To simplify the notation, in the following we will omit the subscript in the duality pairing $\braket{\cdot,\cdot}_{\hilb_{-2},\hilb_{+2}}$ between $\hilb_{-2}$ and $\hilb_{+2}$.
	
	(i) Since the functional $\hilb\supset\hilb_{+2}\ni\varphi\mapsto\braket{f,\varphi}$ is unbounded as a functional on $\hilb$, its kernel, i.e., the space $\mathcal{D}_0=\{g\in\hilb_{+2}:\,\braket{f,g}=0\}$, is dense in $\hilb$~\cite[Lemma~1.2.3]{albeverio2000singular}, whence its associated coherent domain $\mathcal{E}(\mathcal{D}_0)$ is dense in $\fock$. But, given any $g\in\mathcal{D}_0$, its corresponding coherent vector $\varepsilon(g)$ satisfies $\a{f}\varepsilon(g)=\braket{f,g}\,\varepsilon(g)=0$, whence the same holds for all elements of $\mathcal{E}(\mathcal{D}_0)$. This proves $\mathcal{E}(\mathcal{D}_0)\subset\ker\a{f}|_{\fock_{+2}}$. Since the former is dense in $\fock$, so is the latter.
	
	(ii) Since $f\neq0$, the set $\mathcal{D}=\left\{g\in\hilb_{+2}:\,\braket{f,g}\neq0\right\}$ is also dense in $\hilb$, thus $\mathcal{E}(\mathcal{D})$ is again dense in $\fock$. Given $g\in\mathcal{D}$, we then have $\a{f}\left(\braket{f,g}^{-1}\varepsilon(g)\right)=\varepsilon(g)$. This proves $\mathcal{E}(\mathcal{D})\subset\ran\a{f}|_{\fock_{+2}}$ and thus, again, the latter is dense in $\fock$.
    \end{proof}
    
    We remark that, while the denseness of the range of $\a{f}$ holds under the only condition $f\neq0$, the denseness of its kernel additionally requires $f\notin\hilb$.

	\begin{proposition}\label{prop:kerA}
        Let $s\leq2$, $f_1,\dots,f_N\in\hilb_{-s}\setminus\hilb$ a family of $\hilb$-independent form factors (see Definition~\ref{def:hilbindepentent}), and $B_1,\dots,B_N\in\mathbb{C}^{D \times D}$. Then, for $A$ defined in Eq.~\eqref{eq:AAdag}, $\ker A|_{\hfrak_{+2}}$ is dense in $\hfrak$.
    \end{proposition}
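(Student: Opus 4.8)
The plan is to exhibit a dense family of vectors in $\hfrak_{+2}$ that $A$ annihilates, built from tensor products $\chi\otimes\varepsilon(g)$ of an arbitrary $\chi\in\mathfrak{h}$ with coherent vectors $\varepsilon(g)$ over suitably chosen $g$. First I would use the preliminaries: for $g\in\hilb_{+2}$ one has $\varepsilon(g)\in\fock_{+2}$, and since $f_j\in\hilb_{-s}\subset\hilb_{-2}$, writing $\braket{\cdot,\cdot}$ for the relevant duality pairing, $\a{f_j}\varepsilon(g)=\braket{f_j,g}\,\varepsilon(g)$. Hence, introducing the common ``kernel''
\[
\mathcal{D}_0:=\bigl\{g\in\hilb_{+2}\,:\,\braket{f_j,g}=0,\ j=1,\dots,N\bigr\},
\]
for every $g\in\mathcal{D}_0$ and every $\chi\in\mathfrak{h}$ the vector $\chi\otimes\varepsilon(g)$ lies in $\mathfrak{h}\otimes\fock_{+2}=\hfrak_{+2}$ and satisfies $A(\chi\otimes\varepsilon(g))=\sum_{j}\braket{f_j,g}\,(B_j^*\chi)\otimes\varepsilon(g)=0$. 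Thus $\mathfrak{h}\otimes\mathcal{E}(\mathcal{D}_0)\subseteq\ker A|_{\hfrak_{+2}}$, where $\mathcal{E}(\mathcal{D}_0)=\span\{\varepsilon(g):g\in\mathcal{D}_0\}$. Since $\mathfrak{h}$ is finite-dimensional, the left-hand side is dense in $\hfrak=\mathfrak{h}\otimes\fock$ as soon as $\mathcal{E}(\mathcal{D}_0)$ is dense in $\fock$, and by the recalled property of coherent domains the latter holds once $\mathcal{D}_0$ is dense in $\hilb$. So the whole statement reduces to density of $\mathcal{D}_0$; note that the matrices $B_j$ play no role here, consistently with the hypotheses of the proposition, which impose nothing on them.

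To prove $\mathcal{D}_0$ dense in $\hilb$ I would argue by induction on $N$, and this is exactly where the $\hilb$-independence hypothesis enters. For $N=1$: since $f_1\notin\hilb$, the functional $\braket{f_1,\cdot}|_{\hilb_{+2}}$ is unbounded with respect to the $\hilb$-norm (otherwise it would be represented by some vector of $\hilb$, which by density of $\hilb_{+2}$ in $\hilb$ would equal $f_1$), and an everywhere-discontinuous linear functional on a dense subspace of a Hilbert space has dense kernel, as in~\cite[Lemma~1.2.3]{albeverio2000singular}. For the inductive step, assuming the claim for families of size $N-1$: the family $f_1,\dots,f_{N-1}$ is again $\hilb$-independent, so $W:=\bigcap_{j=1}^{N-1}\ker(\braket{f_j,\cdot}|_{\hilb_{+2}})$ is dense in $\hilb$. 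Now I look at $\braket{f_N,\cdot}|_W$. If it is unbounded with respect to the $\hilb$-norm, its kernel — which is precisely $\mathcal{D}_0$ — is dense in $\hilb$ by the same functional-analytic fact, and we are done.

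The only alternative, which I must rule out, is that $\braket{f_N,\cdot}|_W$ is bounded, i.e.\ there is $h\in\hilb$ with $\braket{f_N-h,g}=0$ for all $g\in W$. Here $\hilb$-independence forces $f_1,\dots,f_{N-1}$ to be linearly independent in $\hilb_{-2}$, so the map $\hilb_{+2}\ni g\mapsto(\braket{f_1,g},\dots,\braket{f_{N-1},g})\in\mathbb{C}^{N-1}$ is onto, $W$ has codimension $N-1$ in $\hilb_{+2}$, and hence its annihilator in $\hilb_{-2}$ is exactly $\span\{f_1,\dots,f_{N-1}\}$. Therefore $f_N-h=\sum_{j<N}d_jf_j$ for some $d_j\in\mathbb{C}$, i.e.\ $f_N-\sum_{j<N}d_jf_j=h\in\hilb$, a nontrivial linear combination of the $f_j$ lying in $\hilb$ — contradicting $\hilb$-independence. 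So the bounded case cannot occur and the induction closes. The main obstacle is precisely this last step: ensuring that, after passing to the dense subspace $W$ on which the first $N-1$ form factors already ``pair to zero'', the functional attached to $f_N$ is still unbounded there. It is here that one needs the full strength of $\hilb$-independence rather than mere linear independence of $f_1,\dots,f_N$ in $\hilb_{-2}$; the coherent-vector computation, the reduction via finite-dimensionality of $\mathfrak{h}$, and the single-functional base case are all routine.
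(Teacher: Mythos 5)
Your proof is correct and follows essentially the same route as the paper: both exhibit $\mathfrak{h}\otimes\mathcal{E}(\mathcal{D}_0)$, the coherent domain over the joint kernel $\mathcal{D}_0$ of the pairings with $f_1,\dots,f_N$, as a dense subspace of $\ker A|_{\hfrak_{+2}}$, noting that the $B_j$ play no role. The only difference is that the paper cites the density of $\mathcal{D}_0$ directly from~\cite[Lemma~3.1.1]{albeverio2000singular}, whereas you reprove that lemma by induction on $N$ via the unbounded-functional/annihilator argument --- a correct, self-contained substitute for the citation.
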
	

    \begin{proof}
    Again, it suffices to prove the claim for $s=2$. Since $\{f_j\}_{j=1}^N$ are $\hilb$-independent, the space of single-particle states that are in the kernel of \textit{all} form  factors,
    \begin{equation}
        \mathcal{D}_0=\left\{g\in\hilb_{+2}\,\subset\hilb:\;\braket{f_j,g}_{\hilb_{-2},\hilb_{+2}}=0 \;\; \forall j=1,\dots,N\right\}
    \end{equation}
    is dense in the single-particle space $\hilb$, see~\cite[Lemma~3.1.1]{albeverio2000singular}. Then the coherent domain, $\mathcal{E}(\mathcal{D}_0)$, is also dense in $\fock$, and likewise $\mathfrak{h}\otimes\mathcal{E}(\mathcal{D}_0)$ is dense in $\hfrak$. Now, given any coherent vector $\varepsilon(g)\in\mathcal{E}(\mathcal{D}_0)$, one has $\a{f_j}\varepsilon(g)=0$, so for any $u\in\mathfrak{h}$,
    \begin{equation}
        A\,(u\otimes\varepsilon(g))=\sum_{j=1}^N(B_j^* u)\otimes(\a{f_j}\varepsilon(g))=0,
    \end{equation}
    whence, by linearity, $\mathfrak{h}\otimes\mathcal{E}(\mathcal{D}_0)\subset\ker A|_{\hfrak_{+2}}$. Since the former is dense in $\hfrak$, so is the latter.
    \end{proof}

    We remark that the denseness of $\ker A|_{\hfrak_{+2}}$ is irrespective of the choice of $B_1,\dots,B_N \in \mathbb{C}^{D \times D}$. No point of Assumption~\ref{as:Bj} enters here. However, the $\hilb$-independence of the form factors is crucial.

    \begin{proposition}\label{prop:ranA}
        Let $s\leq2$, $f_1,\dots,f_N\in\hilb_{-s}\setminus\hilb$ be a family of $\hilb$-independent form factors (see Definition~\ref{def:hilbindepentent}), and let $B_1,\dots,B_N\in\mathbb{C}^{D \times D}$ satisfy Assumption~\ref{as:Bj}. Then, for $A$ defined in Eq.~\eqref{eq:AAdag}, $\ran A|_{\hfrak_{+2}}$ is dense in $\hfrak$.
    \end{proposition}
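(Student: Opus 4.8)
The plan is to use parts (i) and (ii) of Assumption~\ref{as:Bj} to simultaneously diagonalize the matrices $B_1,\dots,B_N$, thereby reducing the claim to a finite orthogonal direct sum of copies of Lemma~\ref{lem:kerrandense}(ii). Since a commuting family of normal matrices on $\mathbb{C}^D$ is simultaneously unitarily diagonalizable, there is an orthonormal basis $\{v_1,\dots,v_D\}$ of $\mathfrak{h}=\mathbb{C}^D$ and scalars $\beta_j^{(k)}\in\mathbb{C}$ with $B_j v_k=\beta_j^{(k)}v_k$, hence (normality) $B_j^* v_k=\overline{\beta_j^{(k)}}\,v_k$, for all $j,k$. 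In the associated orthogonal decomposition $\hfrak=\bigoplus_{k=1}^D v_k\otimes\fock$, and recalling that $\hfrak_{+2}=\dom(H_{\textnormal{free}})=\mathfrak{h}\otimes\fock_{+2}$, a generic $\Psi\in\hfrak_{+2}$ is $\Psi=\sum_k v_k\otimes\Psi_k$ with $\Psi_k\in\fock_{+2}$, and since $f\mapsto\a{f}$ is antilinear,
\begin{equation}
    A\Psi=\sum_{k=1}^D v_k\otimes\Bigl(\sum_{j=1}^N\overline{\beta_j^{(k)}}\,\a{f_j}\Bigr)\Psi_k=\sum_{k=1}^D v_k\otimes\a{g_k}\Psi_k,\qquad g_k:=\sum_{j=1}^N\beta_j^{(k)}f_j .
\end{equation}
Thus $A|_{\hfrak_{+2}}$ is block diagonal with blocks $\a{g_k}|_{\fock_{+2}}$, so that $\ran A|_{\hfrak_{+2}}=\bigoplus_{k=1}^D v_k\otimes\ran\a{g_k}|_{\fock_{+2}}$.

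Next I would show that each $g_k$ is a nontrivial linear combination of the $f_j$, so that Lemma~\ref{lem:kerrandense}(ii) applies to it. If all $\beta_j^{(k)}$ vanished, then $v_k\in\bigcap_{j=1}^N\ker B_j=\{0\}$ by Assumption~\ref{as:Bj}(iii), contradicting $\Vert v_k\Vert=1$. Hence some $\beta_j^{(k)}\neq0$, and $\hilb$-independence of $\{f_j\}_{j=1}^N$ (Definition~\ref{def:hilbindepentent}) forces $g_k\notin\hilb$; since also $g_k\in\hilb_{-s}$, we have $g_k\in\hilb_{-s}\setminus\hilb$ and $\a{g_k}\in\mathcal{B}(\fock_{+2},\fock)$ is a legitimate operator in the sense of Proposition~\ref{prop:af}. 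Lemma~\ref{lem:kerrandense}(ii) then yields that $\ran\a{g_k}|_{\fock_{+2}}$ is dense in $\fock$ for every $k$.

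To conclude, since $\ran A|_{\hfrak_{+2}}=\bigoplus_{k=1}^D v_k\otimes\ran\a{g_k}|_{\fock_{+2}}$ is a finite orthogonal sum whose $k$-th summand is dense in $v_k\otimes\fock$, continuity of the orthogonal projections onto the summands gives $\overline{\ran A|_{\hfrak_{+2}}}=\bigoplus_{k=1}^D v_k\otimes\fock=\hfrak$, as desired.

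I do not expect a genuine obstacle here: the conceptual content is just simultaneous diagonalization of $\{B_j\}$ combined with the single-mode statement of Lemma~\ref{lem:kerrandense}. The two points requiring care are bookkeeping ones: correctly tracking the complex conjugation produced by the antilinearity of $\a{\cdot}$ when passing from $\sum_j\overline{\beta_j^{(k)}}\a{f_j}$ to $\a{g_k}$ (and using $B_j^*v_k=\overline{\beta_j^{(k)}}v_k$), and performing the reduction to blocks on $\hfrak_{+2}$ rather than on $\hfrak$ so that each $\a{g_k}$ is only evaluated where it is defined. Note that Assumption~\ref{as:Bj}(iii) enters exactly once, and only to rule out a trivial block $g_k=0$; the $\hilb$-independence of the form factors is what upgrades ``$g_k$ nontrivial'' to ``$g_k\in\hilb_{-s}\setminus\hilb$'', matching the hypotheses of Lemma~\ref{lem:kerrandense} verbatim.
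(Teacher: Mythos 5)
Your proof is correct and follows essentially the same route as the paper's: simultaneous diagonalization of the commuting normal family $\{B_j\}$, block-decomposition of $A$ into annihilation operators $\a{g_k}$ with $g_k=\sum_j \beta_j^{(k)}f_j$, Assumption~\ref{as:Bj}(iii) to rule out $g_k=0$, $\hilb$-independence to get $g_k\notin\hilb$, and Lemma~\ref{lem:kerrandense}(ii) on each block. The bookkeeping points you flag (the conjugation from antilinearity and working on $\hfrak_{+2}$) are handled the same way in the paper.
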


    \begin{proof}
    Because of points (i) and (ii) in Assumption~\ref{as:Bj}, the matrices $B_1, \ldots, B_N$ share a common orthonormal eigenbasis: there exists a set of orthonormal vectors $\{u^{(a)}\}_{a=1}^D\subset\mathfrak{h}$ such that
    \begin{equation}
        B_ju^{(a)}=b_j^{(a)}u^{(a)},\qquad \forall j=1,\dots,N,\;\;a=1,\dots,D,
    \end{equation}
    with $b_j^{(a)}\in\mathbb{C}$. By normality, $\{u^{(a)}\}_{a=1}^D$ is also a common eigenbasis for their adjoints, with 
    \begin{equation}
        B_j^*u^{(a)}=\overline{b_j^{(a)}}u^{(a)},\qquad \forall j=1,\dots,N,\;\;a=1,\dots,D.
    \end{equation}
    We can thus conveniently decompose the Hilbert space as
    \begin{equation}
        \hfrak\simeq\bigoplus_{a=1}^D\hfrak^{(a)},\qquad
        \hfrak^{(a)}=\span\{u^{(a)}\}\otimes\fock\simeq\fock.
    \end{equation} 
    This decomposition is preserved by $A$, i.e., $A\hfrak^{(a)}\subset\hfrak^{(a)}$, since for any $\Psi\in\fock$,
    \begin{eqnarray}\label{eq:commuting}
        A\,(u^{(a)}\otimes\Psi)&=&\sum_{j=1}^N B_j^* u^{(a)}\otimes\a{f_j}\Psi\nonumber\\
        &=&\sum_{j=1}^N \overline{b_j^{(a)}}u^{(a)}\otimes\a{f_j}\Psi\nonumber\\
        &=&u^{(a)}\otimes\a{\sum_{j=1}^N b_j^{(a)}f_j}\Psi.
    \end{eqnarray}
    So we can also decompose
    \begin{equation}
        A=\bigoplus_{a=1}^N A^{(a)}, \qquad
        A^{(a)} = \a{\sum_{j=1}^N b_j^{(a)}f_j}.
    \end{equation}
    Obviously, $\ran A=\bigoplus_a\ran A^{(a)}$. So it suffices to prove that $\ran A^{(a)}$ is dense in $\hfrak^{(a)}$ for each $a$, in order to establish the proposition. Now, $A^{(a)}$ is just an annihilation operator on $\hfrak^{(a)}\simeq\fock$, so Lemma~\ref{lem:kerrandense} applies, provided that
    \begin{equation}
        \sum_{j=1}^N b_j^{(a)}f_j \in \hilb_{-2} \setminus \hilb.
    \end{equation}
    This requirement is fulfilled because of the $\hilb$-independence of $f_1, \ldots, f_N$, provided that there exists at least one $j=1,\dots,N$ such that $b_j^{(a)}\neq0$. The latter statement means that none of the vectors $u^{(a)}$ is such that $B_ju^{(a)}=0$ for all $j=1,\dots,N$ simultaneously, which is guaranteed by point (iii) of Assumption~\ref{as:Bj}, viz. $\bigcap_{j=1}^N\ker B_j=\{0\}$. Therefore Proposition~\ref{lem:kerrandense} indeed applies, rendering denseness of $\ran A^{(a)}|_{\fock_{+2}} \subset \ran A^{(a)}$ in $\hfrak^{(a)}$, which establishes the proposition.    
    \end{proof}

    \subsection{Finishing the proof}    
    
    \begin{proof}[Proof of Theorem~\ref{thm:main}]
    (a) We would like to apply Theorem~\ref{thm:posilicano2}. Since $f_j \in \hilb_{-s} \setminus \hilb$ for $s\le1$, we have $f_j \in \hilb_{-1} \setminus \hilb$. By Proposition~\ref{prop:af}, this implies $a(f_j) \in \mathcal{B}(\fock_1, \fock)$. In case of the spin--boson model~\eqref{eq:model} we are considering, it is then immediate to see that $A \in \mathcal{B}(\hfrak_{+1},\hfrak)$, and by duality, $A^* \in \mathcal{B}(\hfrak,\hfrak_{-1})$.  The requirement that $\ker A|_{\hfrak_2}$ and $\ran  A|_{\hfrak_2}$ be dense in $\hfrak$ have been proven in Propositions~\ref{prop:kerA} and~\ref{prop:ranA}, respectively.
    
    Now, recall that, for $s \in (0,1]$, the choice $T=-ARA^*$ is needed to make $H=H_{\textnormal{free}}+A^*+A$ agree with $H_T$ as defined before, cf.~Eqs.~\eqref{eq:at1}--\eqref{eq:at2}. This $T$ is a bounded operator in $\hfrak$, see Eq.~\eqref{eq:Tbound}, so it is in particular Kato-bounded with respect to $H_0$. Thus, Theorem~\ref{thm:posilicano2} applies, rendering statement (a).

    (b) We would like to apply Theorem~\ref{thm:posilicano3}. Conditions (i) and (ii) have already been verified. Proposition~\ref{prop:af} now provides us, for each form factor $f_j$, with a sequence of regularized form factors $\{f_{j, n}\}_{n \in \mathbb{N}} \subset \hilb$ such that $\Vert f_{j, n} - f_j \Vert_{\hilb_{-s}} \to 0$ as $n \to \infty$, and
    \begin{equation}\label{eq:aconvergence}
        \Vert a(f_{j, n}) - a(f_j) \Vert_{\fock_s \to \fock} \to 0, \qquad
        \Vert a^*(f_{j, n}) - a^*(f_j) \Vert_{\fock \to \fock_{-s}} \to 0.
    \end{equation}
    Correspondingly, we define the regularized operators as in Eq.~\eqref{eq:AnAndagger}
    \begin{equation}
        A_n := \sum_{j=1}^N B_j^* \otimes a(f_{j, n}), \qquad
        A_n^* := \sum_{j=1}^N B_j \otimes a^*(f_{j, n}),
    \end{equation}
    which are obviously closable as $f_{j, n} \in \hilb$. Furthermore, for $\Psi \in \hfrak_1$, we introduce the number operator $\mathcal{N}:=\mathrm{d}\Gamma(1)$, and we recall the following property (see e.g.~\cite[Proposition 2.3]{lonigro2022generalized}): for all $\Phi\in\fock_{+1}$,
    \begin{equation}
        \Vert\dOmega^{1/2} \Phi\Vert_{\fock}\geq m_0^{1/2}\Vert \mathcal{N}^{1/2}\Phi\Vert_{\fock}
    \end{equation}
    thus implying
    \begin{equation}
\begin{aligned}
     \Vert(\dOmega+1)^{1/2} \Phi\Vert^2_{\fock}&=\Vert\dOmega^{1/2}\Phi\Vert^2_{\fock}+\Vert\Phi\Vert^2_{\fock}
     \geq m_0\Vert \mathcal{N}^{1/2}\Phi\Vert^2_{\fock}+\Vert\Phi\Vert^2_\fock\\
     &\geq\min\{m_0,1\}\left(\Vert \mathcal{N}^{1/2}\Phi^2\Vert_\fock+\Vert\Phi\Vert^2_\fock\right)
     =\min\{m_0,1\}\Vert(\mathcal{N}+1)^{1/2}\Phi\Vert^2_\fock,
\end{aligned}
\end{equation}
whence, for every $\Psi\in\hfrak_{+1}$,
\begin{equation}\label{eq:in}
    \Vert(H_{\textnormal{free}}+1)^{1/2}\Psi\Vert_\hfrak\geq\Vert I_{\mathfrak{h}}\otimes(\dOmega+1)^{1/2}\Psi\Vert_\hfrak\geq\min\{m_0,1\}^{1/2}\Vert I_{\mathfrak{h}}\otimes(\mathcal{N}+1)^{1/2}\Psi\Vert_\hfrak.
\end{equation}
Therefore, denoting either $A$ or $A^\dag$ by $A^\sharp$, we have
\begin{equation}\label{eq:Anbound}    \begin{aligned}
        \Vert A_n^\sharp \Psi \Vert_{\hfrak}
        \le &\sum_{j=1}^N \Vert B_j \Vert_{\mathfrak{h}\rightarrow\mathfrak{h}} \Vert (I_{\mathfrak{h}} \otimes a^\sharp(f_{j, n})) \Psi \Vert_{\hfrak}
        \le \sum_{j=1}^N \Vert B_j \Vert_{\mathfrak{h}\rightarrow\mathfrak{h}} \Vert f_{j, n} \Vert_{\hilb} \Vert I_{\mathfrak{h}}\otimes(\mathcal{N}+1)^{\frac 12} \Psi \Vert_{\hfrak}\\
        \le &\bigg(\sum_{j=1}^N \Vert B_j \Vert_{\mathfrak{h}\rightarrow\mathfrak{h}} \Vert f_{j, n} \Vert_{\hilb} (\min\{m_0, 1\})^{-\frac 12}\bigg) \Vert (H_{\textnormal{free}}+1)^{\frac 12} \Psi \Vert_{\hfrak}
        =: c \Vert \Psi \Vert_{\hfrak_1},
    \end{aligned}
    \end{equation}
    where we used Eq.~\eqref{eq:in}; so $A_n|_{\hfrak_1} \in \mathcal{B}(\hfrak_1, \hfrak)$ and $\dom (A_n) \supset \hfrak_1$, which establishes requirement (iii). Furthermore, Eq.~\eqref{eq:Anbound} implies that there is some $C>0$, such that for any $\varepsilon>0$:
    \begin{equation}
        \Vert A_n \Psi \Vert_{\hfrak}
        \le C \Vert (H_{\textnormal{free}}+1)^{\frac 12} \Psi \Vert_{\hfrak}
        \le C \big( \varepsilon \Vert (H_{\textnormal{free}}+1) \Psi \Vert_{\hfrak} + \varepsilon^{-1} \Vert \Psi \Vert_{\hfrak} \big).
    \end{equation}
        So $A_n^* + A_n$ is infinitesimally Kato-bounded with respect to $(H_{\textnormal{free}}+1)$ and thus also with respect to $H_{\textnormal{free}}$, which establishes requirement (iv). Concerning requirement (v), the convergence~\eqref{eq:aconvergence} implies
    \begin{equation}\label{eq:Aconvergence}
        \Vert A_n - A \Vert_{\hfrak_s \to \hfrak}
        \le \sum_{j=1}^N \Vert B_j \Vert_{\mathfrak{h}\rightarrow\mathfrak{h}} \Vert a(f_{j, n}) - a(f_j) \Vert_{\fock_s \to \fock} \to 0,
    \end{equation}
    as $n \to \infty$. So in particular, $\Vert A_n - A \Vert_{\hfrak_2 \to \hfrak} \to 0$, which is requirement (v).
    
    Finally, for (vi), we just choose $E_n = 0$. By the same mapping property argument as in Eq.~\eqref{eq:Tbound}, $A_n R A_n^*$ is bounded, so it is relatively bounded with respect to any symmetric operator with relative bound 0. Further, $T$ is bounded, so
    \begin{equation}
    \begin{aligned}
        &\Vert A_n R A_n^* + E_n + T \Vert_{\hfrak \to \hfrak}
        = \Vert A_n R A_n^* - A R A^* \Vert_{\hfrak \to \hfrak}\\
        \le &\underbrace{\Vert A_n - A \Vert_{\hfrak_s \to \hfrak}}_{\to 0}
            \underbrace{\Vert R A_n^* \Vert_{\hfrak \to \hfrak_s}}_{< \infty}
            + \underbrace{\Vert A R \Vert_{\hfrak_{-s} \to \hfrak}}_{< \infty}
            \underbrace{\Vert A_n^* - A^* \Vert_{\hfrak \to \hfrak_{-s}}}_{\to 0}
        \to 0,
    \end{aligned}
    \end{equation}
    where $\Vert A_n^* - A^* \Vert_{\hfrak \to \hfrak_{-s}} \to 0$ follows from Eq.~\eqref{eq:aconvergence} by the same arguments as in~\eqref{eq:Aconvergence}. This also establishes (vi). 
    
    Consequently, Theorem~\ref{thm:posilicano3} applies, and we obtain the desired convergence~\eqref{eq:normresconvergence}.
    \end{proof}
    
	\section{Concluding remarks}\label{sec:conclusions}

    In this work, we have discussed the ultraviolet renormalization problem for generalized spin--boson models, exploiting the abstract framework provided by Posilicano in~\cite{posilicano2020self}. By this technique, under suitable requirements, we were able to provide an explicit expression for the self-adjointness domain of the Hamiltonian. The requirements include the choice of form factors $f_1,\dots,f_N\in\hilb_{-1}$ (Case~1, see Eq.~\eqref{eq:hilbscale}), so they exhibit ``mild'' ultraviolet divergences, as well as Assumption~\ref{as:Bj} on the operators $B_1,\dots,B_N$ modulating the interaction between the quantum system and the boson field. Furthermore, it was shown that such models can be obtained as the norm resolvent limit of their regularized versions (e.g., those in which an ultraviolet cutoff is imposed). This generalizes the results presented in~\cite[Section 4]{lonigro2022generalized}, where the existence of a self-adjoint realization of generalized spin--boson models was proven through perturbative methods. 
    
    The research initiated in this paper offers, among others, two clear routes of continuation. First, the conditions on the matrices $B_1,\dots,B_N$, while reasonably general and including many cases of physical interest, are likely prone to be relaxed. This is strongly suggested by the results in~\cite[Sections 5--6]{lonigro2022generalized} and~\cite{lonigro2023self}, where an explicit expression for the self-adjointness domain fully compatible with the one provided here (cf.~\cite[Remark 3.8]{lonigro2023self}) was obtained for a class of models corresponding to a choice of $B_1,\dots,B_N$ which \textit{violates} Assumption~\ref{as:Bj}. An extension of our results to the case of an infinite-dimensional quantum system interacting with the field is also foreseeable along the lines depicted in Remark~\ref{rem:infdim}.

    Second, the abstract framework developed by Posilicano is naturally suited to treat the case of ``strong'' ultraviolet divergences belonging to Case~2. As already discussed (cf.~Remark~\ref{rem:selfenergy}), additional care is required in this case in order to properly choose the operator $T$ entering the statements of Theorem~\ref{thm:posilicano1}--\ref{thm:posilicano3}: while for mild divergences the ``minimal'' choice $T=-ARA^*$ is available (and bounded), novel mathematical intricacies appear when renormalization enters the game. Future research will be devoted to this topic.\bigskip

\small
\noindent\textit{Acknowledgments.}
    This research was supported by the European Union (ERC FermiMath, grant agreement nr. 101040991 of Niels Benedikter). Views and opinions expressed are however those of the author(s) only and do not necessarily reflect those of the European Union or the European Research Council Executive Agency. Neither the European Union nor the granting authority can be held responsible for them. Further, this research was supported by ``Istituto Nazionale di Fisica Nucleare'' (INFN) through the project ``QUANTUM'', by the Italian National Group of Mathematical Physics (GNFM--INdAM), and by European Union – NextGenerationEU (CN00000013 – “National Center for HPC, Big Data and Quantum Computing”). D.L. thanks the Department of Mathematics at ``Università degli Studi di Milano'' for its hospitality. \normalsize
   
\AtNextBibliography{\small}	\DeclareFieldFormat{pages}{#1}\sloppy 
	\printbibliography
	
\end{document}